\definecolor{DarkBlue}{RGB}{0,0,139}
\definecolor{RoyalBlue}{RGB}{65,105,225}
\definecolor{DarkRed}{RGB}{139,0,0}
\definecolor{Orange}{RGB}{255,165,0}
\definecolor{Teal}{RGB}{0,128,128}
\newcommand*{\email}[1]{\href{mailto:#1}{\nolinkurl{#1}}}
\DeclareMathOperator{\sgn}{sgn}
\DeclareMathOperator*{\argmax}{arg\,max}
\newcommand{\calH}{\ensuremath{\mathcal{H}}}
\newcommand{\Exp}{\operatorname*{E}}
\newcommand{\bfx}{\ensuremath{\mathbf{x}}}
\newcommand{\Simplex}{\ensuremath{\Delta}}
\newcommand{\eps}{\ensuremath{\varepsilon}}
\newcommand{\NN}{\ensuremath{\mathbb{N}}}
\newcommand{\QQ}{\ensuremath{\mathbb{Q}}}
\newcommand{\RR}{\ensuremath{\mathbb{R}}}
\newcommand{\PPAD}{\ensuremath{\mathrm{PPAD}}}
\newcommand{\FIXP}{\ensuremath{\mathrm{FIXP}}}
\newcommand{\UEOPL}{\ensuremath{\mathrm{UEOPL}}}
\newcommand{\LinearFIXP}{\ensuremath{\mathrm{Linear\textrm{-}FIXP}}}
\newcommand{\NOT}{\ensuremath{\mathsf{NOT}}}
\newcommand{\PURIFY}{\ensuremath{\mathsf{PURIFY}}}
\newcommand{\OR}{\ensuremath{\mathsf{OR}}}
\newcommand{\SqrtSum}{\textup{\textsc{SqrtSum}}}
\newcommand{\PureCircuit}{\textup{\textsc{Pure-Circuit}}}
\newcommand{\GCircuit}{\textup{\textsc{GCircuit}}}
\newtheorem{theorem}{Theorem}
\newtheorem{proposition}{Proposition}
\newtheorem{lemma}{Lemma}
\newtheorem{definition}{Definition}
\newcommand{\discount}{\ensuremath{\gamma}}
\newcommand{\auxnot}[1]{\ensuremath{a_\neg^{#1}}}
\newcommand{\auxor}[1]{\ensuremath{a_\lor^{#1}}}
\newcommand{\auxpur}[2]{\ensuremath{a_{\mathrm{P}}^{#1,#2}}}
\begin{document}

\title{On the Complexity of Stationary Nash Equilibria in\\Discounted Perfect Information Stochastic Games}
\author{Kristoffer Arnsfelt Hansen}
\author{Xinhao Nie}
\affil{Aarhus University, Denmark\\\email{arnsfelt@cs.au.dk} \Authand \email{nie@cs.au.dk}}

\date{October 13, 2025}

\maketitle

\begin{abstract}
  We study the problem of computing stationary Nash equilibria in
  discounted perfect information stochastic games from the viewpoint
  of computational complexity. For two-player games we prove the
  problem to be in $\PPAD$, which together with a previous
  $\PPAD$-hardness result precisely classifies the problem as
  $\PPAD$-complete. In addition to this we give an improved and
  simpler $\PPAD$-hardness proof for computing a stationary
  $\eps$-Nash equilibrium. For 3-player games we construct games showing
  that rational-valued stationary Nash equilibria are not guaranteed
  to exist, and we use these to prove $\SqrtSum$-hardness of computing
  a stationary Nash equilibrium in 4-player games.
\end{abstract}

\section{Introduction}
Stochastic games, first introduced in the seminal work of
Shapley~\cite{PNAS:Shapley1953}, are a general model of dynamic
interactions between players. Shapley's initial model is a
discrete-time finite two-player zero-sum game, where in each round of
the game each player independently chooses an action, which results in
the players receiving immediate payoffs (rewards) and a probabilistic
change of state. The overall payoff of a player is determined from the
sequence of rewards by \emph{discounting} according to a discount
factor $\discount<1$. Shapley proved that in such games, the players
have optimal stationary strategies, i.e., they each have a strategy
that to each state describes a probability distribution over the set
of actions of that state, from which the player draws an action each
time the game enters the state. Shapley also considered the case of
\emph{perfect information}, where in each state only one of the
players has more than one action, and noted that in such games the
players have optimal \emph{pure} stationary strategies, i.e.,
strategies where in every state the players always select the same
action.

Shapley's model has since been extended and modified in many ways, and
the resulting models of stochastic games have been studied
extensively~\cite{book:Filar-Vrieze-1996,book:NeymanSorin-2003}. Our
focus will be on the immediate extension to multi-player discounted
stochastic games. Fink~\cite{JSHUA:Fink1964} and
Takahashi~\cite{JSHUA:Takahashi1964} proved the existence of a
stationary Nash equilibrium in such games. Unlike the case of
zero-sum games, pure stationary strategies are not sufficient to
guarantee existence of Nash equilibria in perfect information
games. Indeed, Zinkevich, Greenwald, and
Littman~\cite{NIPS:ZinkevichGL2005-cyclic-equilibria} gave a simple
example of a two-player perfect information nonzero-sum stochastic
game, where each player controls a single state in which they have
just two actions, having a unique \emph{mixed} stationary Nash
equilibrium.

Strategic-form games may be viewed as discounted stochastic games
having a single state that is repeated in every round of play. Optimal
strategies, for the case of zero-sum games, and Nash equilibria, for
the case of nonzero-sum games, of the strategic form game then
correspond to stationary optimal strategies and stationary Nash
equilibria of the single-state stochastic game. From the perspective
of computational complexity, this means that computing stationary
optimal strategies and stationary Nash equilibria is at least as hard
as computing optimal strategies and Nash equilibria in strategic form
games.

The complexity of computing optimal strategies and Nash equilibria in
strategic form games is a well-studied problem. For zero-sum games,
optimal strategies may be computed efficiently using linear
programming. For nonzero-sum games, the works of Daskalakis, Goldberg,
and Papadimitriou~\cite{SICOMP:DaskalakisGP2009-Nash} and Chen, Deng,
and Teng~\cite{JACM:ChenDT2009-Nash} show that computing a Nash
equilibrium in two-player games or computing an $\eps$-Nash
equilibrium in multi-player games is $\PPAD$-complete for polynomially
small $\eps$, and Etessami and
Yannakakis~\cite{SICOMP:EtessamiY2010-FIXP} proved that computing a
Nash equilibrium in multi-player games with at least three players is
$\FIXP$-complete.

While the results for strategic form games provide computational
hardness for computing stationary Nash equilibria in discounted
stochastic games, only recently has the computational complexity been
settled. Deng~et~al.~\cite{NSR:DengLMWY2022-complexity-Markov-perfect}
and Jin, Muthukumar and
Sidford~\cite{ITCS:JinMS2023-complexity-stochastic-games} proved that
computing stationary $\eps$-Nash equilibria is in $\PPAD$, and
Filos-Ratsikas et~al.~\cite{SICOMP:Filos-RatsikasH2023-FIXP} proved
that computing stationary Nash equilibria is in $\FIXP$. The precise
complexity of computing stationary optimal strategies in two-player
zero-sum stochastic games remains open. Etessami and
Yannakakis~\cite{SICOMP:EtessamiY2010-FIXP} proved that the problem is
in $\FIXP$ and is $\SqrtSum$-hard. For the related problem of
approximating the \emph{values},
Batziou~et~al.~\cite{STOC:BatziouFGMS2025-monotone-contractions}
proved that the problem is in the complexity class $\UEOPL$.

For perfect information games, as shown by Andersson and
Miltersen~\cite{ISAAC:AnderssonM2009-games-on-graphs}, the task of
computing optimal strategies in two-player zero-sum games is
polynomial time equivalent to computing optimal strategies in the
model of simple stochastic games introduced by
Condon~\cite{IC:Condon1992-ssg}. This latter problem has been shown to
be contained in \UEOPL~\cite{JCSS:FearnleyGMS2020-UEOPL}, but its
precise complexity remains an elusive open problem.  Recently it was
shown by Jin, Muthukumar and
Sidford~\cite{ITCS:JinMS2023-complexity-stochastic-games} and by
Daskalakis, Golowich, and
Zhang~\cite{COLT:DaskalakisGZ2023-complexity-stochastic-games}, that
computing stationary $\eps$-Nash equilibria in perfect-information
nonzero-sum $\frac{1}{2}$-discounted stochastic games is \PPAD-hard
for some small unspecified constant~$\eps>0$. The proof by Jin,
Muthukumar and Sidford shows $\PPAD$-hardness for games with
polynomially many players, whereas the proof by Daskalakis, Golowich,
and Zhang shows $\PPAD$-hardness even for games with two players.

\subsection{Our Results}
We show the following results for computing stationary Nash equilibria
in perfect-information discounted stochastic games.
\begin{enumerate}
\item For two-player games we show (Theorem~\ref{thm:PPAD-membership})
  that computing stationary Nash equilibria is in $\PPAD$. Taken
  together with the $\PPAD$-hardness result of Daskalakis, Golowich,
  and Zhang~\cite{COLT:DaskalakisGZ2023-complexity-stochastic-games}
  our result thereby establishes that the problem is
  $\PPAD$-complete. As a direct consequence of our result it follows
  that any two-player game has a stationary \emph{rational-valued}
  Nash equilibrium, whenever all numbers defining the stochastic game
  are rational numbers. Such a result may be viewed as a prerequisite
  for solving the problems using pivoting algorithms such as Lemke's
  algorithm~\cite{MS:Lemke1965-Lemke-Algorithm}. Our proof of
  $\PPAD$-membership shows that stationary Nash equilibria can in
  principle be computed by Lemke's algorithm. Previous classes of
  discounted stochastic games known to possess rational-valued
  stationary optimal strategies or Nash equilibria include two-player
  \emph{single-controller}
  games~\cite{JOTA:ParthasarathyR1981-single-controller} and zero-sum
  \emph{switching-controller}
  games~\cite{JOTA:Filar1981-switching-controller}, and our result
  contributes another important class of stochastic games to this line
  of research.
\item We improve the $\PPAD$-hardness result of Daskalakis, Golowich,
  and Zhang, by proving $\PPAD$-hardness for a concrete $\epsilon>0$,
  namely any
  $\eps < \frac{3 - 2 \sqrt{2}}{288} \approx 5.967 \times
  10^{-4}$. The $\PPAD$-hardness proofs by Jin, Muthukumar and Sidford
  and by Daskalakis, Golowich, and Zhang are shown by reduction from
  the so-called $\eps$-$\GCircuit$ problem which, prior to the
  introduction of the \PureCircuit-problem by
  Deligkas~et~al.~\cite{DeligkasFHM22-Pure-Circuit}, was a standard
  way of proving $\PPAD$-hardness. The problem $\eps$-$\GCircuit$ was
  shown to be $\PPAD$-hard for an unspecified constant $\eps>0$ by
  Rubinstein~\cite{SICOMP:Rubinstein2018-inapprox-NE}. The reductions
  from $\eps$-$\GCircuit$ build gadgets for every gate of the given
  generalized circuit, and joining these gadgets together directly
  results in a game with polynomially many players, thus giving the
  result of Jin, Muthukumar and Sidford.  This reduction alone is
  already very involved. Now, assuming a structural property of the
  given generalized circuit, namely that every gate can be assumed to
  have fan-out at most~2, Daskalakis, Golowich, and Zhang observe that
  players may be assigned to gadgets in such a way that the resulting
  game has just~5 players. To obtain their result for two-player
  games, they introduce an intricate notion of \emph{valid} colorings
  of the gates of the generalized circuit and show how to transform a
  given generalized circuit instance into one that allows for such a
  coloring.

  In contrast, we give a very simple and direct reduction
  (Theorem~\ref{thm:epsilon}) from the \PureCircuit-problem to
  two-player games. Reducing from the \PureCircuit\/ problem allows
  for much simpler gadgets and we exploit that $\PPAD$-hardness of
  \PureCircuit\ holds even for circuits with a bipartite
  \emph{interaction graph}, and this enables us to combine the gadgets
  in a natural way.
\item We construct 3-player games (Definition~\ref{def:G(a)}) with
  unique stationary Nash equilibria that are irrational-valued,
  thereby precluding $\PPAD$-membership. We then use these games as
  gadgets to show (Theorem~\ref{thm:sqrt}) that computing a stationary
  Nash equilibrium in 4-player games is $\SqrtSum$-hard. This
  indicates that computing stationary Nash equilibria in games with 3
  or more players brings additional challenges.
\end{enumerate}

\section{Preliminaries}
\label{sec:prelims}

\subsection{Stochastic Games}
We give here a general definition of stochastic games and afterwards
consider the specialization to perfect information games. An infinite
horizon $n$-player finite stochastic game $\Gamma$ is given as
follows. The game is played on a finite set of states $S$. In every
state $k$, each player~$i$ has a set of actions $A_i(k)$. Let
$A(k) = A_1(k) \times \dots A_n(k)$ denote the set of action profiles
in state $k$. Let $P = \{(k,a) \colon k \in S, a \in A(k)\}$ denote
the pairs of states and action profiles of that state. The immediate
payoff, or \emph{reward} to player~$i$ is given by a function
$u_i \colon P \to \RR$ and the state transitions are given by a
function $q \colon P \to \Simplex(S)$, where $\Simplex(S)$ denotes the
set of probability distributions on $S$.

A \emph{play} of $\Gamma$ is an infinite sequence $h \in P^\infty$.  A
\emph{finite play} up to round~$t$ is a sequence
$h_t \in P^{t-1} \times S$. Let
$\calH = \mathbin{\dot{\cup}}_{t=1}^\infty \left(P^{t-1} \times
  S\right)$ denote the set of all finite plays. For a finite play
$h \in \calH$ we denote by $S(h)$ the last element of $h$, i.e., the
current state after the play~$h$. A behavioral strategy for player~$i$
is then a function $\sigma_i \colon \calH \to \Simplex(A_i(S(h)))$
mapping a play $h$ to a probability distribution over $A_i(S(h))$. A
stationary strategy is a behavioral strategy that depends only on the
last state of a finite play, and may thus be viewed as a function
$x_i \colon S \to \Simplex(A_i(k))$ that maps a state $k \in S$ to a
probability distribution over $A_i(k)$. Behavioral strategies
$\sigma_i$ for each player~$i$ form a behavioral strategy profile
$\sigma=(\sigma_1,\dots,\sigma_n)$. In the same way, stationary
strategies for each player form a stationary strategy profile $x$. A
behavioral strategy profile $\sigma$ and an initial state $s^1 \in S$
define, by Kolmogorov's extension theorem, a unique probability
distribution $\Pr_{s^1,\sigma}$ on plays $(s^1,a^1,s^2,a^2,\dots)$,
where the conditional probability of $a^t=a$ given the play up to
round~$t$, $h_t=(s^1,a^1,\dots,s^t)$, is equal to
$\prod_{i=1}^n \Pr[\sigma_i(h_t)=a_i]$, and the conditional
probability of $s^{t+1}$ given $s^t$ and $a^t$ is equal to
$q(s^t,a^t)$. We denote by $\Exp_{s^1,\sigma}$ the expectation with
respect to $\Pr_{s^1,\sigma}$.\medskip

\noindent For every \emph{discount factor} $0\leq \gamma <1$, the
(normalized) $\gamma$-discounted payoff to player~$i$ of play starting
from state~$s^1$ according to $\sigma$ is defined to be
\begin{equation}
  \label{eq:discounted-value}
  V^\gamma_i(s^1,\sigma) = \Exp_{s^1,\sigma} \left[ (1-\gamma) \sum_{t=1}^\infty \gamma^{t-1}u_i(s^t,a^t)\right] \enspace .
\end{equation}
For $\eps \geq 0$, a behavioral strategy profile $\sigma$ is a
$\gamma$-discounted $\eps$-Nash equilibrium for play starting in
state~$s^1$ if
\begin{equation}
  V^\gamma_i(s^1,\sigma) \geq V^\gamma_i(s^1,(\sigma'_i,\sigma_{-i})) -\eps\enspace ,
\end{equation}
for all players $i \in [n]$, and all behavioral strategies
$\sigma'_i$ for player~$i$. Here, $(\sigma'_i,\sigma_{-i})$ denotes
the strategy profile where player~$i$ uses the strategy $\sigma'_i$
and player~$j$, for $j\neq i$, uses the strategy $\sigma_j$. If
$\sigma$ is a $\gamma$-discounted $\eps$-Nash equilibrium for play
starting in state $s^1$ for all $s^1$, we simply say that $\sigma$ is a
$\gamma$-discounted $\eps$-Nash equilibrium. When $\eps=0$ we
say that $\sigma$ is a $\gamma$-discounted Nash equilibrium.

Fink~\cite{JSHUA:Fink1964} and Takahashi~\cite{JSHUA:Takahashi1964}
proved that any finite discounted stochastic game has a
$\gamma$-discounted equilibrium in stationary strategies for any
discount factor~$\gamma$.

When considering $\eps$-Nash equilibria for concrete values of
$\epsilon$ the range of the rewards becomes important. We shall make
the general assumption that all rewards belong to the interval
$[0,1]$. Note that this implies that the payoffs also belong to the
interval $[0,1]$ due to the normalization factor $(1-\gamma)$ in our
definition of payoffs.

A perfect information stochastic game (also known as a
\emph{turn-based} stochastic game) is given by a partition of the
state $S = S_1 \cup \dots S_n$ such that for any pair of players
$i\neq j$ and any state $k \in S_i$ we have $\abs{A_j(k)}=1$. We say
that player~$i$ \emph{controls} the states in $S_i$. We shall simplify
the notation when considering perfect information stochastic
games. For $k \in S_i$ we let $A(k)$ denote the set of actions of
player~$i$. When the game is in state $k \in S$ and action
$a \in A(k)$ is played we let $r^i_{ka}$ denote the reward of
player~$i$ and let $p^{kl}_{a}$ denote the probability that play
continues in state~$l$. A stationary strategy profile is given as
$(x^1,\dots,x^n)$ where $x^i_k$ is the probability distribution of
player~$i$ over the set of actions $A(k)$. We denote by $x^i_{ka}$ the
probability that player~$i$ chooses action~$a$ in state~$k$.

\section{Nash equilibrium in 2-player games}
Our main result of this section establishes $\PPAD$-membership of the problem of computing a stationary Nash equilibrium in 2-player discounted perfect information games. 
\begin{theorem}
\label{thm:PPAD-membership}
  Computing a stationary Nash equilibrium in 2-player discounted perfect information stochastic games is in \PPAD.
\end{theorem}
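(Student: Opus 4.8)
The plan is to characterize the stationary Nash equilibria by best-response and value-consistency conditions and to recast these as a linear complementarity problem (LCP) that can be processed by Lemke's algorithm~\cite{MS:Lemke1965-Lemke-Algorithm}; membership in $\PPAD$ and rationality of some equilibrium then follow simultaneously. Concretely, I would take as variables the stationary strategies $x^i_{ka}$ together with both players' value vectors $V_i(k)$, and abbreviate $Q_i(k,a) = (1-\gamma) r^i_{ka} + \gamma \sum_l p^{kl}_a V_i(l)$, which is affine in the $V_i$-variables. For a state $k \in S_i$ controlled by player~$i$, the requirement that $x^i_k$ be a best response is the complementarity condition
\begin{equation*}
 x^i_{ka} \geq 0, \qquad V_i(k) - Q_i(k,a) \geq 0, \qquad x^i_{ka}\bigl(V_i(k)-Q_i(k,a)\bigr) = 0,
\end{equation*}
together with $\sum_a x^i_{ka} = 1$. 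Exactly as in the Lemke--Howson LCP for bimatrix games, summing the complementarity over~$a$ then yields player~$i$'s own value $V_i(k) = \sum_a x^i_{ka} Q_i(k,a)$ for free, so these conditions (the controller's best responses and their induced values) are already in clean linear-complementarity form.

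The main obstacle is the value-consistency condition for the \emph{spectator}: for $k \in S_i$ and $j \neq i$ one must also enforce $V_j(k) = \sum_a x^i_{ka} Q_j(k,a)$, and this is exactly what goes beyond the bimatrix setting. In a strategic-form game there is no continuation, so a player's payoff is affine in the opponent's mixed strategy and no value need ever be imposed as a constraint; here, however, player~$i$'s best-response slacks depend on $V_i(l)$ at opponent-controlled states $l \in S_j$ through the continuation term $\gamma\sum_l p^{kl}_a V_i(l)$, and pinning those values down requires the spectator equalities, which are \emph{genuinely bilinear} in $x^i$ and $V_j$ and are \emph{not} implied by the best-response complementarity. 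The one encouraging structural feature is bipartiteness: grouping the variables as $(x^1,V_1)$ and $(x^2,V_2)$, every offending product pairs a strategy of one player with a value of the other, so the system is a bilinear complementarity problem whose value--strategy interaction graph is $2$-colorable, precisely the regime that makes a Lemke-type treatment conceivable. The crux of the argument — and where I expect the real work to lie — is to use this two-player structure (in the spirit of, and extending to the nonzero-sum case, the classical LCP treatments of single- and switching-controller games cited above, and using that $\gamma<1$ makes the diagonal continuation blocks $I-\gamma(\cdot)$ nonsingular) to eliminate the bilinear continuation-through-opponent terms and obtain a single LCP $w = Mz+q$, $z,w \geq 0$, $z^\transpose w = 0$, whose solutions are precisely the stationary Nash equilibria, with $M$ in a class for which Lemke's algorithm cannot terminate on a secondary ray; totality is supplied by the existence of equilibria (Fink~\cite{JSHUA:Fink1964}, Takahashi~\cite{JSHUA:Takahashi1964}).

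Granting such an LCP, $\PPAD$-membership follows in the standard way: the pivot path traced by Lemke's algorithm is a union of vertex-disjoint paths carrying a canonical orientation, so the induced successor/predecessor structure yields an $\EndOfLine$ instance and hence a polynomial-time reduction to the defining $\PPAD$ problem. Since Lemke's algorithm maintains rational basic solutions of a system with rational data, the equilibrium it returns is rational-valued, giving the advertised consequence that every rational two-player game admits a rational stationary Nash equilibrium. I would finally remark that the bipartiteness exploited above is special to two players: with three or more players the spectator products $x^i V_j$ range over all ordered pairs $i \neq j$, so the value--strategy interaction is no longer $2$-colorable, the bilinear structure degenerates to a genuinely multilinear one, and the LCP approach breaks down — consistent with the irrational equilibria constructed later for $3$-player games.
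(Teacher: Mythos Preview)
Your proposal correctly isolates the central obstacle --- the spectator's value-consistency condition $V_j(k)=\sum_a x^i_{ka}Q_j(k,a)$ is genuinely bilinear in strategies and values --- but it does not overcome it. You write that ``the crux of the argument --- and where I expect the real work to lie --- is to use this two-player structure \dots\ to eliminate the bilinear continuation-through-opponent terms and obtain a single LCP,'' and then proceed as if such an LCP were in hand. That elimination is precisely the content of the theorem, and you supply no mechanism for it. The bipartiteness you observe (every offending product pairs $x^i$ with $V_j$, $i\neq j$) is suggestive but does not by itself yield an LCP: eliminating $V_j$ via the strategies produces $(I-\gamma P(x))^{-1}$-type expressions that are rational rather than affine in~$x$, and neither the single-controller nor the zero-sum switching-controller LCP reductions you allude to are known to extend to the nonzero-sum perfect-information setting. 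As written, the proposal is a statement of what one would like to be true rather than an argument.

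The paper's proof takes a different route that \emph{sidesteps} the bilinearity rather than linearizing it. It works in valuation space only: define a correspondence $G$ sending $(v^1,v^2)$ to updated valuations $(\widetilde v^1,\widetilde v^2)$. For $k\in S_i$ the controller's update $\widetilde v^i_k=\max_a v^i_{ka}$ is already piecewise linear. For the spectator, the key observation is that at a fixed point $\widetilde v^j_k$ must lie in the interval $\bigl[\min_{a\in B(k)} v^j_{ka},\ \max_{a\in B(k)} v^j_{ka}\bigr]$ where $B(k)=\argmax_a v^i_{ka}$, and conversely every point of this interval is realized by some best-response mixture of player~$i$. So one lets $G$ be set-valued and output the whole interval. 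The technical ingredient (Proposition~\ref{prop:selection-circuit}) is a PL \emph{pseudo-circuit} --- a PL circuit with auxiliary fixed-point variables --- that computes exactly this ``select a $y$-value over the $\argmax$ of~$x$'' correspondence, built from the linear-OPT-gate of Filos-Ratsikas~et~al.~\cite{STOC:Filos-RatsikasH2024-PPAD}. This places the fixed-point problem for $G$ in $\LinearFIXP=\PPAD$ via Etessami--Yannakakis. Once a fixed-point valuation is found, the equilibrium strategies are recovered in polynomial time by solving the linear system~\eqref{eq:recover-NE-from-valuations}, which is now genuinely linear because the $v^j_{ka}$ have become constants. The rationality corollary comes from $\LinearFIXP$ having rational fixed points, not from tracking Lemke pivots directly.
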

Combining this with the matching $\PPAD$-hardness result of
Daskalakis~et~al.~\cite{COLT:DaskalakisGZ2023-complexity-stochastic-games}
(that holds even for computing \eps-Nash equilibria) thus establishes
that the problem is \PPAD-complete.

To obtain our result we make use of the framework for proving
$\PPAD$-membership via convex optimization due to
Filos-Ratsikas~et~al.~\cite{STOC:Filos-RatsikasH2024-PPAD}. This
framework builds on the characterization of \PPAD\ in terms of
computing fixed points of piecewise linear (PL) arithmetic circuits
due to Etessami and Yannakakis~\cite{SICOMP:EtessamiY2010-FIXP}. An
arithmetic circuit is a circuit using gates that can perform
arithmetic operations, maximum or minimum, i.e., a gate in
$\{+,-,*,\div, \max,\min\}$, as well as rational constants. A PL
arithmetic circuit restricts the gates to be in
$\{+,-,\max,\min,\times \zeta\}$, where $\times \zeta$ denotes
multiplication by any rational constant $\zeta$. Restricting the
general arithmetic circuits used to define the class $\FIXP$ to PL
arithmetic circuits yields the class $\LinearFIXP$, defined to be
closed under polynomial time reductions. With this, Etessami and
Yannakakis proved that $\PPAD=\LinearFIXP$. This means that to prove
$\PPAD$ membership of a given total search problem, it suffices to
reduce the problem at hand to that of computing a fixed point of a
given PL arithmetic circuit.

Filos-Ratsikas~et~al.~\cite{STOC:Filos-RatsikasH2024-PPAD} defined a
particular type of PL arithmetic circuits, called \emph{PL
  pseudo-circuits} that turn out to be useful for proving
\PPAD-membership.
\begin{definition}
  A PL pseudo-circuit with $n$ inputs and $m$ outputs is a PL
  arithmetic circuit
  $F \colon \RR^n \times [0,1]^\ell \to \RR^m \times [0,1]^\ell$. The
  output of the circuit on input $x$ is any $y$ that satisfies $F(x,z)=(y,z)$
  for some $z \in [0,1]^\ell$.
\end{definition}
A PL pseudo-circuit thus computes a correspondence (multi-function)
$G \colon \RR^n \rightrightarrows \RR^m$. The idea behind the
definition is that when proving $\PPAD$-membership, using the
characterization $\PPAD=\LinearFIXP$, one may construct PL circuits
for computing such a correspondence that is only required to work
correctly at a fixed point, i.e., when the ``auxiliary'' variables $z$
satisfy a fixed point condition.

Filos-Ratsikas~et~al.\ introduced the so-called
\emph{linear-OPT-gate}, implemented as a PL pseudo-circuit, that can
be used in a similar way to the primitive gates in
$\{+,-,\max,\min,\times \zeta\}$, but allows for computing solutions
to certain convex optimization problems. This may in turn be used to
solve \emph{feasibility programs} with \emph{conditional constraints},
and this is the capability we will make use of. The feasibility
program with conditional constraints shown to be solvable using PL
pseudo-circuits by Filos-Ratsikas~et~al.\ are of the form:
\begin{equation}
\label{eq:feasibility-program}
\begin{aligned}
    h_i(y) > 0 \implies  a_i \cdot x \leq b_i& \quad \text{ for } i=1,\dots,m\\   
    x \in [-R,R]^n&
\end{aligned}
\end{equation}
The feasibility program is \emph{parametrized} by $n,m,k \in \NN$, a
rational matrix $A \in \RR^{m\times n}$ with row vectors $a_i$, for
$i=1,\dots,m$, and PL arithmetic circuits $h_i \colon \RR^k \to \RR$,
for $i=1,\dots,m$. It takes as \emph{input} $b \in \RR^m$,
$y \in \RR^k$, and $R \in \RR$, and outputs a feasible solution
satisfying the constraints whenever a feasible solution exists. The
parameters are a fixed part of the PL pseudo-circuit solving the
feasibility program.

A technical tool in our $\PPAD$-membership proof is the following
construction, which we believe could also be useful in other settings
and applications.
\begin{proposition}
\label{prop:selection-circuit}
For any $n \geq 1$, there is a PL pseudo-circuit computing the
correspondence $F \colon [0,1]^n \times [0,1]^n \to [0,1]$ given by
\[
  F(x,y)=[\underline{z},\overline{z}] \enspace ,
\]
where we, for the given input $x$ and $y$, let
$A=\argmax_{j \in [n]} x_j$ and define
$\underline{z} = \min_{j \in A} y_j$ and
$\overline{z} = \max_{j \in A} y_j$. The circuit may be constructed in
time polynomial in $n$.
\end{proposition}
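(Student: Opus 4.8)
The plan is to recast the selection as an optimization over the probability simplex restricted to the active set $A=\argmax_{j\in[n]}x_j$, and to realize this restriction with the conditional constraints of~\eqref{eq:feasibility-program}. First I would use ordinary PL gates to compute $M=\max_{i\in[n]}x_i$ and, for each $j$, the quantity $h_j(x)=M-x_j$; this is a legitimate PL arithmetic circuit, and by construction $h_j(x)\geq 0$ always, with $h_j(x)>0$ exactly when $j\notin A$. Introducing weight variables $\lambda_1,\dots,\lambda_n$, I would impose the unconditional constraints $\lambda_j\geq 0$ and $\sum_{j}\lambda_j=1$ together with the conditional constraints $h_j(x)>0\implies \lambda_j\leq 0$. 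Since a condition fires precisely on the inactive indices, these force $\lambda_j=0$ for every $j\notin A$ while leaving the active coordinates free, so the feasible set of $\lambda$ is exactly the set $\Simplex(A)$ of distributions supported on $A$.

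The reason the active set must be handled on the simplex, rather than by directly constraining an output variable, is an asymmetry in~\eqref{eq:feasibility-program}: a conditional constraint can only be \emph{switched on} by a strict inequality $h_i>0$, i.e.\ on the open set of inactive indices, whereas the set $A$ itself is closed (it is cut out by the equalities $x_j=M$). Hence I cannot enforce constraints indexed by the active coordinates; instead I move the dependence on $A$ entirely into the objective. Using the linear-OPT-gate of Filos-Ratsikas~et~al.\ with the \emph{input} vector $y$ as the linear objective, I would compute $\overline z$ as the optimal value of $\max\{\sum_j y_j\lambda_j : \lambda\in\Simplex(A)\}$ and $\underline z$ as the optimal value of the corresponding minimization. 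Because $\Simplex(A)$ is the convex hull of $\{e_j : j\in A\}$, these optimal values are exactly $\max_{j\in A}y_j$ and $\min_{j\in A}y_j$, as required. The optimal value of such a program is the support function of a polytope and is therefore a piecewise-linear function of the inputs, so it is available as a wire within the framework.

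With $\underline z$ and $\overline z$ computed as wires, the final step is immediate: a one-variable feasibility pseudo-circuit with a variable $z$ and the two constraints $z\geq\underline z$ and $z\leq\overline z$ (whose coefficients are the fixed constants $\pm 1$ and whose right-hand sides are the input wires $\underline z,\overline z$) has feasible set, and hence output correspondence, equal to the interval $[\underline z,\overline z]$; note $\underline z\leq\overline z$ always holds because $A\neq\emptyset$. Composing the three pieces---the PL preprocessing, the two calls to the linear-OPT-gate, and the final box---yields a PL pseudo-circuit for $F$, and every component has size polynomial in $n$.

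The step I expect to be the crux is producing the selected \emph{value} rather than merely the selecting weights. The natural quantity $\sum_j\lambda_j y_j$ is bilinear in the computed weights $\lambda$ and the inputs $y$, and multiplication of two non-constant wires is not available in a PL circuit (indeed it would take us beyond $\LinearFIXP$). The construction circumvents this by never forming that product explicitly: the bilinear pairing is absorbed into the objective of the linear-OPT-gate, whose optimal value the gate reports directly. Getting this interface right---and, in particular, arguing that feeding $y$ as an input-dependent objective stays within the pseudo-circuit formalism---is the one place that needs care; everything else reduces to the routine verification that the conditional constraints cut out $\Simplex(A)$ and that the support-function values coincide with $\overline z$ and $\underline z$.
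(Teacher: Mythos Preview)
Your approach has a genuine gap at precisely the point you flag as the crux. The linear-OPT-gate of Filos-Ratsikas et~al.\ outputs an optimal \emph{solution} $\lambda^*$ to the linear program, not its optimal \emph{value}; recovering the value $y^\transpose\lambda^*$ would require the bilinear product you correctly note is unavailable. Your support-function argument does not rescue this: the polytope $\Simplex(A)$ over which you optimize is not fixed but depends on the input $x$ through $A=A(x)$, and the resulting map $(x,y)\mapsto\overline z=\max_{j\in A(x)}y_j$ is not even continuous. For instance with $n=2$ and $y=(0,1)$ it equals $1$ at $x=(1,1)$ but jumps to $0$ at $x=(1,1-\eps)$. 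Since the output correspondence of any PL pseudo-circuit has closed graph (it is the projection along the compact auxiliary coordinates of a closed set), any single-valued output is necessarily continuous. Hence $\overline z$ and $\underline z$ cannot be produced as individual wires, and the plan of first computing the two endpoints and then returning the interval between them cannot work as stated.

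The paper's proof avoids this obstruction by never isolating the endpoints. It processes the indices sequentially, maintaining at step $i$ a value $z_i$ in the interval $[\underline z_i,\overline z_i]$ associated with $A_i=\argmax_{j\leq i}x_j$. The update from $z_{i-1}$ to $z_i$ is a tiny feasibility program comparing $x_i$ with $\overline x_{i-1}=\max_{j<i}x_j$: if $x_i<\overline x_{i-1}$ then $z_i=z_{i-1}$, if $x_i>\overline x_{i-1}$ then $z_i=y_i$, and in the tie case any $z_i\in[\min(z_{i-1},y_i),\max(z_{i-1},y_i)]$ is allowed. It is precisely this multi-valuedness at ties that keeps the overall correspondence upper hemicontinuous and hence realizable as a pseudo-circuit, whereas pinning down the single numbers $\overline z$ or $\underline z$ would force a discontinuity.
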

\begin{proof}
  We construct a PL arithmetic circuit computing $F$ by solving a
  sequence of feasibility programs with conditional constraints using
  the linear-OPT gate. Let $x,y \in [0,1]^n$ be the given input. For
  $i \in [n]$, let $A_i = \argmax_{j \in [i]} x_j$ and define
  $\underline{z}_i = \min_{j \in A_i} y_j$ as well as
  $\overline{z}_i = \max_{j \in A_i} y_j$. By this definition we have
  $\underline{z}_1=\overline{z}_1=y_1$,
  $\underline{z}=\underline{z}_n$ and $\overline{z}=\overline{z}_n$.
    
  First the arithmetic circuit computes the values
  $\overline{x}_i=\max_{j \in [i]} x_j$ for $i \in [n]$. Next, for
  each $i = 1,\dots,n$ the arithmetic circuit will compute a value
  $z_i \in [\underline{z}_i,\overline{z}_i]$, thus making $z_n$ the
  desired output of the circuit. For $i=1$, we may take $z_1=y_1$ and
  it is thus sufficient to show how to compute $z_i$ from $z_{i-1}$
  for $i=2,\dots,n$. This is done simply by solving the feasibility
  program:
    \begin{equation}
    \label{eq:selection-feasibility-program}
        \begin{aligned}   
        x_i < \overline{x}_{i-1} \implies &z_i=z_{i-1}\\
        x_i > \overline{x}_{i-1} \implies &z_i=y_i\\
        &z_i \leq \max(z_{i-1},y_i) \\
        &z_i \geq \min(z_{i-1},y_i) 
        \end{aligned}
    \end{equation}
    The feasibility program takes the four inputs $x_i$,
    $\overline{x}_{i-1}$, $z_{i-1}$, and $y_i$, has the single output
    $z_i$, and clearly fits the general form given in
    Equation~\ref{eq:feasibility-program}. More precisely, we may
    express each of the conditional constraints by a pair of
    conditional constraints, expressing the equalities in the
    subsequent by two inequalities. The unconditional constraints may
    be expressed using the constant function~$1$ in the antecedent,
    and we may simply take $R=1$.

    It is now straightforward to prove by induction that
    $z_i \in [\underline{z}_i,\overline{z}_i]$ for all $i \in
    [n]$. Since we have $z_1=y_1$, this clearly holds for
    $i=1$. Assume now that
    $z_{i-1} \in [\underline{z}_{i-1},\overline{z}_{i-1}]$ for
    $1<i\leq n$. If $x_i < \overline{x}_{i-1}$ we have that
    $i \notin A_i$, which means that
    $\underline{z}_i=\underline{z}_{i-1}$ and
    $\overline{z}_i=\overline{z}_{i-1}$. The unique solution of the
    feasibility program is $z_i=z_{i-1}$ and thus
    $z_i \in [\underline{z}_i,\overline{z}_i]$. If instead
    $x_i > \overline{x}_{i-1}$ we have $A_i=\{i\}$, which means that
    $\underline{z}_i=\overline{z}_i=y_i$ The unique solution of the
    feasibility is now $z_i=y_i$ and thus
    $z_i \in [\underline{z}_i,\overline{z}_i]$. Finally, if
    $x_i=\overline{x}_{i-1}$ we have $A_i = A_{i-1} \cup \{i\}$ which
    means that $\underline{z}_i=\min(\underline{z}_{i-1},y_i)$ and
    $\overline{z}_i=\max(\overline{z}_{i-1},y_i)$. The solutions of
    the feasibility program form the interval
    $[\min(z_{i-1},y_i),\max(z_{i-1},y_i)]$ which is a subinterval of
    $[\underline{z}_i,\overline{z}_i]$.
\end{proof}

Since the feasibility program used in the proof above only has four
inputs and one output and does not use the full capabilities of the
linear-OPT gate, one may also directly construct a relatively simple
PL pseudo-circuit for solving it, and we give such a construction in
Appendix~\ref{sec:direct-PL-pseudo-circuit}.

\subsection{Proof of \PPAD-membership}

Consider a two-player perfect information stochastic game given as
defined in Section~\ref{sec:prelims}.  We shall without loss of
generality assume that $r^i_{ka}>0$ for every $k \in S$, $a \in A(k)$,
and $i \in \{1,2\}$.

A \emph{valuation} is a pair of vectors $(v^1,v^2) \in \RR^S \times \RR^S$. 
Let $(v^1,v^2)$ be a valuation and $(x^1,x^2)$ a stationary strategy
profile. For $i \in \{1,2\}$, every $k \in S$ and $a \in A(k)$ define
\emph{action valuations} $v^i_{ka}$ by
\begin{equation}\label{eq:action-valuations}
  v^i_{ka} = r^i_{ka} + \discount \sum_{l \in S} p^{kl}_{a} v^i_l \enspace .
\end{equation}

Based on the given valuation and stationary strategy profile
we may compute \emph{updated} valuations $(\widetilde{v}^1,\widetilde{v}^2)$
by

\begin{equation}\label{eq:updated-valuations-1}
  \widetilde{v}^1_k = \begin{cases}  \max_{a \in A(k)} v^1_{ka} & \text{if } k \in S_1 \\ \sum_{a \in A(k)} x^{2}_{ka} v^1_{ka} & \text{if } k \in S_2
  \end{cases} \phantom{\enspace .}
\end{equation}
and
\begin{equation}\label{eq:updated-valuations-2}
  \widetilde{v}^2_k = \begin{cases}  \max_{a \in A(k)} v^2_{ka} & \text{if } k \in S_2 \\ \sum_{a \in A(k)} x^{1}_{ka} v^2_{ka} & \text{if } k \in S_1
  \end{cases} \enspace .
\end{equation}

A stationary strategy profile $(x^1,x^2)$ induces a \emph{unique}
valuation $(v^1,v^2)$ that is a fixed point solution of
Equations~\eqref{eq:updated-valuations-1} and
\eqref{eq:updated-valuations-2}. That is, it induces a valuation $(v^1,v^2)$ that
equals its own updated valuation according to $(x^1,x^2)$.

We say that a stationary strategy profile
$(\widetilde{x}^1,\widetilde{x}^2)$ is \emph{one-step optimal} with respect to
$(v^1,v^2)$ if for every $k \in S$ and $i \in \{1,2\}$ we have
\begin{equation}
  \widetilde{x}^i_{ka}>0 \implies v^i_{ka} = \max_{a' \in A(k)} v^i_{ka'} \enspace .
\end{equation}

The equations above give rise to a correspondence $F$ mapping a pair,
consisting of a valuation $(v^1,v^2)$ and stationary strategy profile $(x^1,x^2)$,
to the set of pairs consisting of the updated valuations
$(\widetilde{v}^1,\widetilde{v}^2)$ and one-step optimal strategy
profiles $(\widetilde{x}^1,\widetilde{x}^2)$. The fixed points of $F$
corresponds exactly to stationary Nash equilibrium strategy
profiles~\cite{JSHUA:Takahashi1964}.

It is not possible to compute the correspondence $F$ by a PL
arithmetic circuit due to the products $x^2_{ka} v^1_{ka}$ and
$x^1_{ka} v^2_{ka}$ in Equation~\eqref{eq:updated-valuations-1} and
Equation~\eqref{eq:updated-valuations-2}, which by
Equation~\eqref{eq:action-valuations} would involve products of
variables of the form $x^2_{ka}v^1_l$ and $x^1_{ka}v^2_l$. Now, for
2-player games we can partly circumvent this obstacle, just by noting
that when given a valuation $(v^1,v^2)$, for which we \emph{know} it is
a valuation of a stationary Nash equilibrium, we may efficiently
compute a stationary strategy profile that both induces the valuation
$(v^1,v^2)$ and is one-step optimal with respect to $(v^1,v^2)$.

Namely, suppose we are given such a valuation $(v^1,v^2)$. Compute the
corresponding action valuations $v^i_{ka}$ for all $k \in S$,
$a \in A(k)$ and $i \in \{1,2\}$. By assumption we have
$v^i_k = \max_{a \in A(k)} v^i_{ka}$ for all $k \in S$, $a \in A(k)$,
and $i \in \{1,2\}$. For $k \in S_i$, let now
$B_{(v^1,v^2)}(k) = \argmax_{a \in A(k)} v^i_{ka}$ denote the set of
one-step optimal actions of player~$i$ in state $k$. To find
$(x^1,x^2)$ we may then just solve the following
system of linear inequalities.

\begin{equation}\label{eq:recover-NE-from-valuations}
  \begin{aligned}
    \sum_{a \in A(k)} x^{2}_{ka} v^1_{ka} & = v^1_k && k \in S_2 \\
    \sum_{a \in A(k)} x^{1}_{ka} v^2_{ka} & = v^2_k &&  k \in S_1 \\
    \sum_{a \in A(k)} x^i_{ka} & = 1 && i \in \{1,2\}, k \in S_i\\
    x^i_{ka} & \geq 0 && i \in \{1,2\}, k \in S_i,  a \in B_{(v^1,v^2)}(k)\\
    x^i_{ka} & = 0 && i \in \{1,2\}, k \in S_i,  a \notin B_{(v^1,v^2)}(k)    
  \end{aligned}
\end{equation}

This leaves the problem of finding a valuation $(v^1,v^2)$ that is
induced by a stationary Nash equilibrium strategy profile. To do this,
we define a correspondence $G$ mapping valuations $(v^1,v^2)$ to
valuations $(\widetilde{v}^1,\widetilde{v}^2)$ as follows. First,
given $(v^1,v^2)$, compute all actions valuations $v^i_{ka}$ for
$i \in \{1,2\}$, every $k \in S$ and $a \in A(k)$. Following
Equation~\eqref{eq:updated-valuations-1} and
Equation~\eqref{eq:updated-valuations-2} we may immediately compute
$\widetilde{v}^i_k = \max_{a \in A(k)} v^i_{ka}$ for $i \in \{1,2\}$
and every $k \in S_i$.

For $k \in S_1$ let
\begin{equation}
  \underline{v}^2_k = \min_{a \in B_{(v^1,v^2)}(k)} v^2_{ka} \quad \text{and} \quad
\overline{v}^2_k = \max_{a \in B_{(v^1,v^2)}(k)} v^2_{ka} \enspace ,
\end{equation}
and similarly, for $k \in S_2$ let
\begin{equation}
  \underline{v}^1_k = \min_{a \in B_{(v^1,v^2)}(k)} v^1_{ka} \quad \text{and} \quad
\overline{v}^1_k = \max_{a \in B_{(v^1,v^2)}(k)} v^1_{ka} \enspace .
\end{equation}
The possible function values of $G$ are then given by any $\widetilde{v}^2_k \in [\underline{v}^2_k , \overline{v}^2_k]$ for $k \in S_1$ and any $\widetilde{v}^1_k \in [\underline{v}^1_k , \overline{v}^1_k]$ for $k \in S_2$.
A fixed point $(v^1,v^2)$ of $G$ implies that the system of
inequalities~\eqref{eq:recover-NE-from-valuations} is feasible, and we
may thus from $(v^1,v^2)$ compute a stationary Nash equilibrium
strategy profile in polynomial time using linear programming. We can compute the correspondence
$G$ by a PL pseudo-circuit using the construction of Proposition~\ref{prop:selection-circuit}.

This implies that we can compute the valuation of a stationary Nash
equilibrium strategy profile in \PPAD, and since a corresponding
stationary Nash equilibrium strategy profile can be computed in
polynomial time from the valuation, the problem of finding a
stationary Nash equilibrium strategy profile is also in \PPAD.

\section{Approximate Nash Equilibrium in 2-player games}

In this section, we show \PPAD\/-hardness of the problem of computing
an approximate stationary Nash equilibrium in 2-player discounted
perfect-information games by reducing from the $\PPAD$-complete
problem \PureCircuit, introduced by
Deligkas~et~al.~\cite{DeligkasFHM22-Pure-Circuit}.  It follows, as a
straightforward consequence, that computing an \emph{exact} Nash
equilibrium is \PPAD\/-hard as well.

The following definition is a special version of the definition by
Deligkas~et~al.~\cite{DeligkasFHM22-Pure-Circuit} suited for our
application.
\begin{definition}[\PureCircuit\/ problem \cite{DeligkasFHM22-Pure-Circuit}]
  An instance $I$ of \PureCircuit\/ is given by a vertex set $V = [n]$
  and a set $G$ of gates. Each gate $g \in G$ belongs to one of three
  types $\{\NOT, \OR, \PURIFY \}$. Given such an instance, the task is
  to find an assignment $\bfx \in \{0,1,\bot\}^V$ that satisfies the
  constraints of each gate described below.
  \begin{itemize}
  \item A gate of type $\NOT$ is given as $(\NOT, u, v)$ with
    input node $u$ and output node $v$ and places the following
    constraint on $\bfx$:
    \[
      (\bfx[u] = 0 \implies \bfx[v] = 1) \land (\bfx[u] = 1 \implies \bfx[v] = 0) \enspace .
    \]
  \item A gate of type $\OR$ is given as $(\OR, u, v, w)$
    with input nodes $u,v$ and output node  $w$ and places the following
    constraint on $\bfx$:
    \[
      (\bfx[u] = \bfx[v] = 0 \implies \bfx[w] = 0)\land ((\bfx[u] = 1)
      \lor (\bfx[v] = 1) \implies \bfx[w] = 1) \enspace .
    \]
  \item A gate of type $\PURIFY$ is given as $(\PURIFY, u, v, w)$
    with input node $u$ and output nodes $v, w$ and places the following
    constraint on $\bfx$:
    \[
      (\{\bfx[v],\bfx[w] \} \cap \{0,1\} \neq \emptyset) \land
      (\bfx[u] \in \{0,1\} \implies \bfx[v] = \bfx[w] = \bfx[u])
      \enspace .
    \]
  \end{itemize}
  Each node of $I$ is the output node of at most one gate. The
  \emph{interaction graph} of $I$ is the directed graph with vertex
  set $V$ and having an edge from node $u$ to node $v$ if there is a
  gate having $u$ as input node and $v$ as output node.
  \end{definition}

  Deligkas~et~al.~\cite[Corollary~2.3]{DeligkasFHM22-Pure-Circuit}
  proved that the \PureCircuit\ problem defined above is
  \PPAD-complete, even when assuming that the interaction graph is
  bipartite. This property is crucial for our simple reduction to
  2-player games, as detailed in the proof below.

\begin{theorem}
  \label{thm:epsilon}
  For any $0 \leq \epsilon < \frac{3 - 2 \sqrt{2}}{288}$, the problem
  of computing an $\epsilon$-approximate Nash equilibrium for 2-player
  $\frac{1}{2}$-discounted perfect information stochastic games is
  \PPAD-hard. This holds even when every player has at most~$2$
  actions in every state of the game and where players strictly
  alternate being the controlling player in any play.
\end{theorem}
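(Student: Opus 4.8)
The plan is to reduce from the \PureCircuit\ problem, exploiting the bipartiteness of the interaction graph. Write $V = L \mathbin{\dot{\cup}} R$ for the bipartition and let player~$1$ control the states arising from $L$ and player~$2$ those arising from $R$. Because the interaction graph is bipartite, every gate wire crosses the bipartition, so every transition in the game switches the controlling player and the two players strictly alternate along every play, as required. Each node $n$ is represented by a state $s_n$, and I read off its logical value through two fixed thresholds $\tau_0 < \tau_1$: an equilibrium value at most $\tau_0$ denotes $0$, at least $\tau_1$ denotes $1$, and anything strictly in between denotes $\bot$. The reduction runs in polynomial time, placing one constant-size gadget per gate, and every controlled state will have at most two actions. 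The key mechanism is that at a state controlled by player~$i$ the best response realises a maximum/threshold in player~$i$'s own action valuations, while the value \emph{induced for the opponent} is a controllable convex combination; I use the former to drive the Boolean logic of a gate and the latter as the quantity that downstream gadgets consume, with genuine mixing playing the role of $\bot$.

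For each gate type I build a constant-size gadget. The $\NOT$ gadget is a state whose controller chooses between a route that hands the opponent value $1$ and a route that hands value $0$, with rewards tuned so that the best response flips exactly as the input value crosses $\frac12$; since $\gamma = \frac12$, a single-step reward gap of the right size makes the two routes comparable precisely at that threshold, so the induced value behaves like $1-u$. The $\OR$ gadget lets the controller of the output state choose between the two input states, so that the best response selects the larger value and realises $\max(u,v)$, which meets the $\OR$ constraint at the Boolean extremes. The $\PURIFY$ gadget is the delicate one: it must force its two outputs apart so that at least one of them is clean. I realise the doubling map $v \approx \min(1,2u)$ and $w \approx \max(0,2u-1)$ by composing two threshold sub-gadgets with offset crossing points, again using that the $\frac12$-discount turns a one-step reward difference into a factor-$2$ amplification; at the Boolean extremes this forces $v=w=u$, and for intermediate $u$ it guarantees that at least one output is pushed to $0$ or $1$.

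Correctness has two directions. For the \emph{forward} direction, given a valid assignment $\bfx$ of the instance I exhibit a stationary profile whose gadget values threshold back to $\bfx$; by construction every controller is exactly best-responding, so this is an exact Nash equilibrium, and in particular an $\epsilon$-Nash equilibrium. The \emph{backward} direction is the crux: from an arbitrary $\epsilon$-Nash equilibrium I read the values $V(s_n)$, threshold them, and must verify every gate constraint. Here the $\epsilon$ slack in the best-response condition perturbs each gadget's induced value by a bounded amount, and the thresholds $\tau_0,\tau_1$ must be chosen so that this perturbation can never flip a value that a gate forces to $0$ or to $1$, nor destroy the ``at least one clean output'' guarantee of $\PURIFY$.

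The main obstacle, and the source of the explicit constant, is the quantitative analysis of the $\PURIFY$ gadget under $\epsilon$-slack. Bounding the worst case---both outputs simultaneously as far from $\{0,1\}$ as the $\epsilon$-perturbed gadget allows---reduces to a one-variable optimisation whose stationarity condition is quadratic, with relevant root $3 - 2\sqrt{2} = (\sqrt{2}-1)^2$; combining this root with the factor-$\frac12$ discount, the factor-$2$ amplification, and the constant-factor losses incurred when the gadgets are chained together yields exactly the bound $\epsilon < \frac{3-2\sqrt{2}}{288}$ below which the thresholded assignment is guaranteed to satisfy all gates (the case $\epsilon=0$ giving hardness of \emph{exact} Nash equilibrium). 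It then remains only to check the two structural claims---at most two actions per state and strict alternation---which follow by confirming that every auxiliary state of a gadget uses a binary choice and respects the bipartite $2$-colouring.
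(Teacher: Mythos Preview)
Your high-level architecture matches the paper's: reduce from \PureCircuit\ with a bipartite interaction graph, assign the two sides of the bipartition to the two players, and build a constant-size gadget per gate so that alternation and the two-action bound hold. However, the proposal breaks down at the gadget level, and the \OR\ gadget in particular does not work as described.

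You write that the \OR\ gadget ``lets the controller of the output state choose between the two input states, so that the best response selects the larger value and realises $\max(u,v)$''. With only those two actions, the controller's best response is generically pure: the output probability $p_w$ lands in $\{0,1\}$, selecting merely \emph{which} input has larger value, not the value itself. In particular, when both inputs are~$0$ (i.e.\ $p_{u_0},p_{u_1}\leq\tau_0$) nothing prevents $p_w=1$, violating the \OR\ constraint. More fundamentally, your proposal is inconsistent about the encoding: you say downstream gadgets consume ``the value induced for the opponent'', yet your \OR\ analysis relies on the \emph{controller's} value realising $\max$; these are different quantities and there is no mechanism forcing them to agree. The paper avoids this by having action~$1$ at the output state transition \emph{randomly} with probability $\tfrac{1}{2}$ to each input, so that the controller's payoff from action~$1$ is essentially $\tfrac{1}{2}(p_{u_0}+p_{u_1})$, and comparing this against a fixed threshold reward on action~$0$. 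Everything is then expressed in terms of the single quantity $p_u$ (the probability the controller of~$u$ plays action~$1$), which is simultaneously what the opponent receives as expected reward at~$u$; there is no separate ``opponent's value'' channel.

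Two smaller points. First, the ``forward direction'' is unnecessary: \PureCircuit\ is a total search problem and discounted stochastic games always possess stationary Nash equilibria, so one only needs to show that any $\eps$-equilibrium of the constructed game decodes to a satisfying assignment. Second, you attribute the constant $\frac{3-2\sqrt{2}}{288}$ to the \PURIFY\ analysis, but in the paper's construction the binding constraint is the \OR\ gadget (the inequality $\tfrac{1}{4}r-\tfrac{1}{2}l-\tfrac{1}{6}$ is strictly tighter than the \PURIFY\ constraints once $m=(l+r)/2$); the optimisation over the thresholds $l,r$ is what produces the $3-2\sqrt{2}$ factor.
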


\begin{proof}
  We construct a reduction from the \PureCircuit\ problem with \NOT\
  gates, \OR\ gates, and \PURIFY\ gates, having a bipartite
  interaction graph. Let $I$ be a given \PureCircuit\ instance with
  nodes $V$, with bipartition, $V = V_1 \dot\cup V_2$ and gates
  $G$. From $I$ we construct a stochastic game $\Gamma_I$ as follows.
  
  The states of $\Gamma_I$ are formed by the set of nodes $V$,
  controlled by the two players according to the bipartition, together
  with a constant number of auxiliary states.  Thus, letting
  $S = S_1 \dot\cup S_2$ denote the partition of the states of the two
  players, we have $V_1 \subset S_1$ and $V_2 \subset S_2$. For a
  state $u \in V_i$, player~$i$ is given the action set $\{0,1\}$ and
  for an auxiliary state $u \in S_i \setminus V_i$, player~$i$ has the
  (trivial) action set $\{1\}$. A stationary strategy may thus be
  described by numbers $p_u \in [0,1]$, where $p_u$ for $u \in V_i$ is
  the probability of player~$i$ choosing action~$1$.

  A stationary strategy given by $(p_u)_{u \in S}$ is mapped to an
  assignment $\bfx$ of $I$ according to parameters $l$ and $r$, such
  that $0<l<r<1$, to be specified later. If $p_u \in [0,l]$ we let
  $\bfx[u]=0$, if $p_u \in [r,1]$ we let $\bfx[u]=1$, and otherwise we
  let $\bfx[u]=\bot$.

  The rewards of the players in $\Gamma_I$ all belong to the set~$[0,1]$
  and we ensure that for $i \in \{1,2\}$ and all states $u \in S_i$,
  player~$i$ is given reward~$0$ for both actions in state~$u$. In
  addition we ensure that action~$0$ gives reward~0 to both players. This property,
  together with having players alternating, allows us to bound future
  discounted rewards to player~$i$ starting from a state controlled by
  player~$i$ beyond the following state.
  \begin{lemma}
    \label{lemma:future-bound}
    Let $s^1 \in S_i$ and let $\sigma$ be any strategy profile. Then
    \begin{equation}
      \frac{1}{2} \Exp_{s^1,\sigma}[u_i(s^2,a^2)] \leq  V_i^{\frac{1}{2}}(s^1,\sigma) \leq \frac{1}{2} \Exp_{s^1,\sigma}[u_i(s^2,a^2)] + \frac{1}{12} \enspace ,
    \end{equation}
    where the play starting at state $s^1$ given by $\sigma$ is denoted as $(s^1,a^1,s^2,a^2,\dots)$.
  \end{lemma}
  \begin{proof}
    Since the players are alternating and player~$i$ can only get
    non-zero reward at states controlled by the other player, the
    terms with odd $t$ in Equation~\ref{eq:discounted-value} are all~0
    and the terms with even $t>2$ are bounded by $(\frac{1}{2})^t$,
    which means that they in total sum up to at most
    $(\frac{1}{2})^4 + (\frac{1}{2})^6 + (\frac{1}{2})^8 \dots \leq \frac{1}{12}$.
  \end{proof}
  
  To simulate an absorbing state giving each player reward~0, while
  having players alternate, we create a simple 2-cycle between an
  auxiliary state for each player in which both players receive
  reward~0. In the following figures we indicate this simply by the
  pair $(0,0)$, with the understanding that an edge pointing to
  $(0,0)$ is actually pointing to the auxiliary state of the cycle to
  maintain alternation between the players. We denote this as the
  absorbing cycle.

  We next construct parts of $\Gamma_I$ to simulate the gates of $I$
  depending on the type of the gates. For each player and every type
  of gate we have an auxiliary state for each output of the gate
  type. We denote the auxiliary states for player $i$ by \auxnot{i}
  and \auxor{i} for types \NOT\ and \OR\, and by \auxpur{i}0 and
  \auxpur{i}1 for type \PURIFY. The rewards in the auxiliary states
  are given in the analysis below.

  We illustrate these for each type of gate in
  Figure~\ref{fig:reduction}, where gray circular nodes and red square
  nodes are used to distinguish the players, and the orange and blue
  arcs are used to distinguish the actions.  For simplicity of
  notation and analysis, we assume, for the gate under consideration,
  that player~1 is controlling the output nodes, which means that
  player~2 is controlling the input nodes. The case of player~2
  controlling the output node is constructed in the exact same way,
  with the roles of player~1 and player~2 exchanged.

  \paragraph{\bf \NOT\ gates.} Consider a gate $g = (\NOT, u, v)$. In
  state $v$, controlled by player~1, action~0 leads to the state $u$
  and action~1 leads to the auxiliary state $\auxnot2$ which in turn
  leads to the absorbing cycle. Let $r_{\auxnot2}$ denote the reward
  to player~1 in state $\auxnot2$. Let $V$ denote the payoff to
  player~1 of play starting in state~$v$, and let $V_0$ and $V_1$
  denote the payoff to player~1 obtained by switching action in
  state~$v$ to~0 and~1, respectively. We then have
  $V=p_v V_1 + (1-p_v)V_0$.

  If player~1 chooses action~0 in state~$v$, player~1 receives reward
  $p_u$ in the state $u$, and by Lemma~\ref{lemma:future-bound} we
  have $V_0 \in [\frac{1}{4}p_u, \frac{1}{4}p_u+\frac{1}{12}]$. If
  player~1 instead chooses action~1, player~1 receives reward
  $r_{\auxnot2}$ in state $\auxnot2$ and since play continues in the
  absorbing cycle we have $V_1= \frac{1}{4} r_{\auxnot2}$.

  To ensure that the \NOT\ gate is satisfied, we need to 
  ensure that \begin{enumerate*}[label=(\roman*)]
  \item
    \label{not:con1}
    $ p_v \geq r$ when $p_u \leq l$, and
  \item
    \label{not:con2}
    $p_v \leq l$ when $p_u \geq r$.
  \end{enumerate*}
  We show in Appendix~\ref{apdx：NOT-reduction} that if
  \begin{equation}
    \label{eq:bound1}
  \left(\frac{1}{1-r} + \frac{1}{l}\right)2\epsilon <
  \frac{1}{2}r - \frac{1}{2}l - \frac{1}{6} \enspace ,
  \end{equation}
  we may assign a rational value to $r_{\auxnot2}$ such that the above
  conditions are satisfied by any stationary $\eps$-approximate Nash
  equilibrium.

  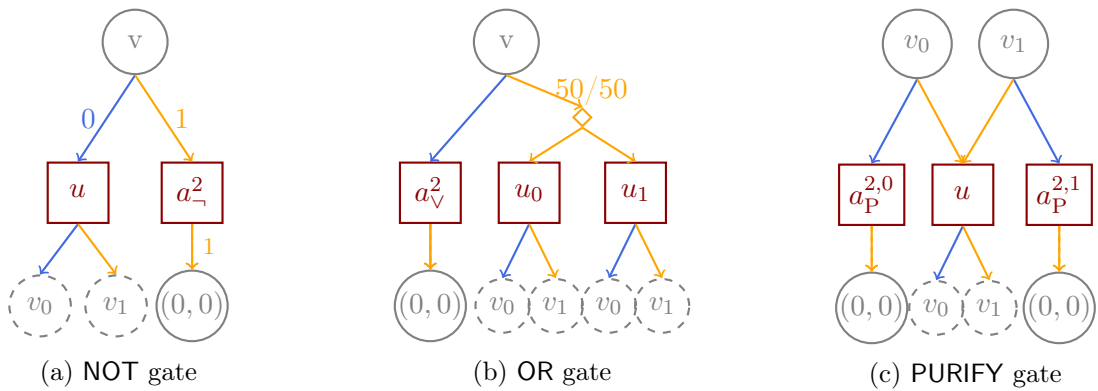
\begin{figure}[ht]
  \centering
  \begin{subfigure}[t]{0.3\textwidth}
    \centering
    \begin{tikzpicture}[thick,->,baseline=(current bounding box.north)]
      \node [draw, circle, color=gray, inner sep =6pt]{v}
    child[color=RoyalBlue] {
    node[draw,color=DarkRed, minimum size=0.8cm, yshift=-5mm] {$u_{}$}
    [sibling distance = 10mm]
    child[color=RoyalBlue] {
      node[draw, dashed, circle, color=gray, inner sep =4pt] {$v_0$}
    }
    child[color=Orange] {
      node[draw, dashed, circle, color=gray, inner sep =4pt] {$v_1$}  
    }
    edge from parent
    node[left] {0}; 
    }
    child[color=Orange] {
    node[draw,color=DarkRed,minimum size=0.8cm, yshift=-5mm] (top) {$\auxnot2$}
    child[color=Orange]{
      node[draw, circle, color=gray, inner sep =0pt](bot) {$(0,0)$}
    }
    edge from parent
    node[right] {1}
    };
  \draw[Orange, dash pattern=on 2pt off 2pt] (top) -- (bot)
    node[right,midway,color=Orange] {\footnotesize 1}
    ;
    \end{tikzpicture}
    \caption{\NOT\ gate}
  \end{subfigure}
  \hfill
  \begin{subfigure}[t]{0.3\textwidth}
    \centering
    \begin{tikzpicture}[thick,->,baseline=(current bounding box.north)]
        \node
        [draw, circle, color=gray, inner sep =6pt] {v}
        [sibling distance = 20mm]
        
        child[color=RoyalBlue] {
          node[draw, color=DarkRed,yshift=-5mm, minimum size=0.8cm](top) {$\auxor2$}
          child[color=Orange]{
          node[draw, circle, color=gray, inner sep =0pt](bot) {$(0,0)$}
            }
        }
        child[color=Orange, yshift=5mm] {
            node[draw, diamond, minimum size = 7pt,inner sep=0pt] {}
            [sibling distance = 14mm]
            child[color=Orange] {
                node[draw, color=DarkRed, minimum size=0.8cm, yshift=5mm] {$u_0$}
                [sibling distance = 7mm]
                child[color=RoyalBlue] {
                    node[draw, dashed, circle, color=gray, inner sep =3pt] {$v_0$}
                }
                child[color=Orange] {
                    node[draw, dashed, circle, color=gray, inner sep =3pt] {$v_1$}  
                }
            }
            child[color=Orange] {
                node[draw, color=DarkRed,minimum size=0.8cm, yshift=5mm] {$u_1$}
                [sibling distance = 7mm]
                child[color=RoyalBlue] {
                node[draw, dashed, circle, color=gray, inner sep =3pt] {$v_0$}
            }
            child[color=Orange] {
            node[draw, dashed, circle, color=gray, inner sep =3pt] {$v_1$}  
            }
            }
            edge from parent
            node[right] {50/50}
        }        ;
        \draw[Orange] (top) -- (bot);
        \draw[Orange, dash pattern=on 2pt off 2pt] (top) -- (bot);
    \end{tikzpicture}
    \caption{\OR\ gate}
  \end{subfigure}
  \hfill
  \begin{subfigure}[t]{0.3\textwidth}
    \centering
    \begin{tikzpicture}[thick,->,baseline=(current bounding box.north)]
        \node[circle,draw, inner sep=5pt, color=gray] {$v_0$}
        [sibling distance=12mm]
        child[color=RoyalBlue] 
        { 
        node[draw,color=DarkRed, yshift=-5mm, minimum size=0.8cm] (x1){$\auxpur{2}0$}
        child[Orange]{
        node[circle,draw,color=gray, inner sep=0pt] (y1) {$(0,0)$}
        }
        }
        child[color=Orange] { 
        node[draw,color=DarkRed,minimum size=0.8cm,yshift=-5mm] (x) {$u_{}$} 
        [sibling distance = 7mm]
            child[color=RoyalBlue] {
            node[draw, dashed, circle, color=gray, inner sep =3pt] {$v_0$}
            }
            child[color=Orange] {
            node[draw, dashed, circle, color=gray, inner sep =3pt] {$v_1$}  
            }
        }
        ;
        
        \node[circle,draw, right=8mm, inner sep=5pt, color=gray](y) {$v_1$}
        [sibling distance=12mm]
        child [missing]
        child [color=RoyalBlue]{ 
        node[draw, color=DarkRed, yshift=-5mm,minimum size=0.8cm] (x2){$\auxpur{2}1$} 
        child[Orange]{
        node[circle,draw,color=gray, inner sep = 0pt] (y2) {$(0,0)$}
        }
        }

        ;
        \draw[Orange] (y.south) -- (x.north);
        \draw[Orange, dash pattern=on 2pt off 2pt] (x1) -- (y1);
        \draw[Orange, dash pattern=on 2pt off 2pt] (x2) -- (y2);
    \end{tikzpicture}
    \caption{\PURIFY\ gate}
  \end{subfigure}

  \caption{Games for the three gates, when player~1 controls the outputs.}
  \label{fig:reduction}
  \end{figure}

  \paragraph{\bf \OR\ gates.} Consider a gate $g = (\OR, u_0,
  u_1,v)$. In state $v$, controlled by player~1, action~0 leads to the
  auxiliary state $\auxor2$ which in turn leads to the absorbing
  cycle. When player~1 chooses action~1, the next state is $u_0$ with
  probability $\frac{1}{2}$ and $u_1$ with probability $\frac{1}{2}$.
  Let $r_{\auxor2}$ denote the reward to player~1 in state
  $\auxor2$. Similarly to the case above, we let $V$ denote the payoff
  to player~1 of play starting in state~$v$, and let $V_0$ and $V_1$
  denote the payoff to player~1 obtained by switching action in
  state~$v$ to~0 and~1, respectively. We again have
  $V=p_v V_1 + (1-p_v)V_0$.

  If player~1 chooses action~0, player~1 receives reward $r_{\auxor2}$
  in state $\auxnot2$ and since play continues in the absorbing cycle
  we have $V_0= \frac{1}{4} r_{\auxor2}$.  If player~1 instead chooses
  action~1 in state~$v$, player~1 receives reward $p_{u_0}$ or
  $p_{u_1}$ in the state $u_0$ or $u_1$, each with
  probability~$\frac{1}{2}$, and by Lemma~\ref{lemma:future-bound} we
  have
  $V_1 \in [\frac{1}{8}(p_{u_0}+p_{u_1}),
  \frac{1}{8}(p_{u_0}+p_{u_1})+\frac{1}{12}]$.

  To ensure that the \OR\ gate is satisfied, we need to 
  ensure that \begin{enumerate*}[label=(\roman*)]
  \item
    \label{or:original-con1} $ p_v \leq l$ when both $p_{u_0} \leq l$ and
    $p_{u_1}\leq l$, and
  \item
    \label{or:original-con2} $p_v \geq r$ when either $p_{u_0} \geq r$ or
    $p_{u_1}\geq r$.
  \end{enumerate*}
  We shall in fact ensure the stronger conditions, that
  \begin{enumerate*}[label=(\roman*')]
  \item
    \label{or:con1} $ p_v \leq l$ when $p_{u_0} + p_{u_1} \leq 2l$,
    and
  \item
    \label{or:con2} $p_v \geq r$ when $p_{u_0} + p_{u_1}\geq r$.
  \end{enumerate*}
  We show in Appendix~\ref{apdx：OR-reduction} that if
  \begin{equation}
    \label{eq:bound2}
    \left(\frac{1}{1-r} + \frac{1}{l}\right)2\epsilon <
    \frac{1}{4}r - \frac{1}{2}l - \frac{1}{6}
  \end{equation}
  we may assign a rational value to $r_{\auxor2}$ such that the above
  conditions are satisfied by any stationary $\eps$-approximate Nash
  equilibrium.
   
  \paragraph{\bf \PURIFY\ gates.} Consider a gate
  $g = (\PURIFY, u, v_0,v_1)$. In state $v_0$, controlled by player~1,
  action~1 leads to the state $u$ and action~0 leads to the auxiliary
  state $\auxpur20$ which in turn leads to the absorbing cycle. In
  state $v_1$, controlled by player~1, action~1 leads to the state $u$
  and action~0 leads to the auxiliary state $\auxpur21$ which in turn
  leads to the absorbing cycle. Let $r_{\auxpur20}$ and
  $r_{\auxpur21}$ denote the rewards to player~1 in state $\auxpur20$
  and $\auxpur21$. The states $v_0$ and $v_1$ behave similarly to the
  state $v$ in the case of a \NOT\ gate, but with different rewards in
  the auxiliary states. Hence the analysis of each state is similar as
  well. For the full analysis, we introduce an additional parameter
  $m$, where $l < m <r$.

  To ensure that the \PURIFY\ gate is satisfied, it is sufficient to
  ensure that
  \begin{multicols}{2}
    \begin{enumerate}[label=(\roman*)]
    \item \label{pur:con1} $p_{v_0} \leq l$ when $p_u \leq m$
    \item \label{pur:con2} $p_{v_0} \geq r$ when $p_u \geq r$
    \item \label{pur:con3} $p_{v_1} \leq l$ when $p_u \leq l$
    \item \label{pur:con4} $p_{v_1} \geq r$ when $p_u \geq m$
    \end{enumerate}
  \end{multicols}
  In case $p_u \leq l$, conditions~\ref{pur:con1} and~\ref{pur:con3}
  give that both $p_{v_0} \leq l$ and $p_{v_1} \leq l$. In case
  $p_u \geq r$, conditions~\ref{pur:con2} and~\ref{pur:con4} gives
  that both $p_{v_0} \geq r$ and $p_{v_1} \geq r$. Finally, for any value
  of $p_u$, at least one of the conditions~\ref{pur:con1}
  or~\ref{pur:con4} is satisfied, which gives either $p_{v_0}\leq l$
  or $p_{v_1}\geq r$.

  We show in Appendix~\ref{apdx：PURIFY-reduction} that if both
  \begin{equation}
    \label{eq:bound3}
    \left(\frac{1}{1-r} + \frac{1}{l}\right)2\epsilon <
    \frac{1}{2}r - \frac{1}{2}m - \frac{1}{6}
  \end{equation}
  and
  \begin{equation}
    \label{eq:bound4}
    \left(\frac{1}{1-r} + \frac{1}{l}\right)2\epsilon <
    \frac{1}{2}m - \frac{1}{2}l - \frac{1}{6}
  \end{equation}
  we may assign rational values to $r_{\auxpur20}$ and $r_{\auxpur21}$
  such that the above conditions are satisfied by any stationary
  $\eps$-approximate Nash equilibrium.

  We show in the lemma below that we may find appropriate
  constants $l$, $r$, and $m$ satisfying all the required conditions
  above. This completes the proof, since this also means that the
  reduction may also be carried out in polynomial time.
    
  \begin{lemma}
    \label{lemma:epsilon}
    There exist constants $l$, $r$, and $m$ such that for any
    $\epsilon < \frac{3 - 2 \sqrt{2}}{288}$ conditions
    \eqref{eq:bound1}, \eqref{eq:bound2}, \eqref{eq:bound3} and
    \eqref{eq:bound4} hold.
  \end{lemma}
  The proof involves tedious, but straightforward, calculations, which
  we provide in Appendix~\ref{apdx:lemma:epsilon}. Concretely, we let
  $l=\frac{2-\sqrt{2}}{12}$, $r=\frac{7 - \sqrt{2}}{6}$, and
  $m=(l+r)/2$. Note that the (irrational) numbers $l$, $r$, and $m$
  are used only to obtain $\bfx$ from the given $\eps$-Nash equilibrium,
  and all rewards used to define the game $\Gamma_I$ are all fixed
  rational constants.
\end{proof}

\section{Nash Equilibrium in 4-player games}

In this section we prove that computing a stationary Nash equilibrium
in 4-player games is \SqrtSum-hard. To obtain this, we first construct
3-player games $G(a)$, parametrized by an integer $a \geq 1$ that has
a unique stationary Nash equilibrium with probabilities belonging to
$\QQ(\sqrt{a})$. 

\begin{definition}[The $\frac{1}{2}$-discounted 3-player game $G(a)$]
  For a given integer $a \geq 1$, the game $G(a)$ contains three nodes
  $s_1, s_2, s_3$ where $s_j$ is controlled by player~$j$ for
  $j \in \{1,2,3\}$, respectively. In state~$s_j$ player~$j$ is given
  the set of actions~$\{0,1\}$.  If player~$j$ chooses action~0, the
  game moves to state $s_{(j \bmod 3) + 1}$ and all players receive
  reward~$0$. If instead player~$j$ chooses action~1, each player
  receives a reward depending on~$j$, after which the game enters an
  absorbing state (i.e.\ a state where play never leaves) in which all
  players receive rewards~0.

  Player~$j$ obtains reward~$1$, player~$(j+1 \bmod 3) + 1$ obtains
  reward $2H$, and player~$(j \bmod 3) + 1$ obtains reward $4L$, where
  $L = 22 - \frac{162}{7}a, H = \frac{162}{7}a - 13$ and
  $\gamma = \frac{1}{2}.$ Clearly, $L < 1 < H$. The game is illustrated in
  Figure~\ref{fig:sqrt}~(i).
\label{def:G(a)}
\end{definition}

\begin{figure}[ht]
  \renewcommand\thesubfigure{\roman{subfigure}} \centering
    \begin{subfigure}[t]{0.3\textwidth}
    \centering
    \begin{tikzpicture}[scale=0.69]
    
    \draw[color=gray, very thick] (12,1) circle (0.6cm) node {\( s_1 \)};
    \draw[color=gray, very thick] (11,-0.5) circle (0.6cm) node {\( s_3 \)};
    \draw[color=gray, very thick] (13,-0.5) circle (0.6cm) node {\( s_2 \)};
    \draw[->, thick, Orange] (11.3,0) -- (11.7,0.46) ;
    \draw[->, thick, Orange] (12.3,0.46) -- (12.7,0);
    \draw[->, thick, Orange] (12.4,-0.5) -- (11.6,-0.5);
    \draw[->, thick, RoyalBlue] (12,1.6) -- (12,2.2) node[above]{\small $(1,4L,2H)$};
    \draw[->, thick, RoyalBlue] (10.7,-1) -- (10.4,-1.5) node[below]{\small $(4L,2H,1)$};
    \draw[->, thick, RoyalBlue] (13.3,-1) -- (13.6,-1.5) node[below]{\small $(2H,1,4L)$};

    \end{tikzpicture}    
    \caption{3-player Game $G(a)$}
  \end{subfigure}
  \hfill
  \begin{subfigure}[t]{0.3\textwidth}
    \centering
    \begin{tikzpicture}[scale=0.69]
    
    \draw[color=gray, very thick] (12,1) circle (0.6cm) node {\( s_1^i \)};
    \draw[color=gray, very thick] (11,-0.5) circle (0.6cm) node {\( s_3^i \)};
    \draw[color=gray, very thick] (13,-0.5) circle (0.6cm) node {\( s_2^i \)};
    \draw[->, thick, Orange] (11.3,0) -- (11.7,0.46) ;
    \draw[->, thick, Orange] (12.3,0.46) -- (12.7,0);
    \draw[->, thick, Orange] (12.4,-0.5) -- (11.6,-0.5);
    \draw[->, thick, RoyalBlue] (12,1.6) -- (12,2.2) node[above]{\small $(1,4L_i,2H_i, 1)$};
    \draw[->, thick, RoyalBlue] (10.7,-1) -- (10.4,-1.5) node[below]{\small $(4L_i,2H_i,1,4)$};
    \draw[->, thick, RoyalBlue] (13.3,-1) -- (13.6,-1.5) node[below]{\small $(2H_i,1,4L_i,2)$};

    \end{tikzpicture}    
    \caption{4-player Game $G_i$}
  \end{subfigure}
  \hfill
  \begin{subfigure}[t]{0.3\textwidth}
    \centering
    \begin{tikzpicture}[scale=0.69]
        \draw[color=gray, very thick] (5,0) circle (0.6cm) node {\( s_4 \)};
        \draw[->, thick, RoyalBlue] (4.4,0) -- (3.2,0) node[below] {\(r_0\)};
        \draw[thick, Orange] (5.6,0) --(6,0);
        \draw[thick, Orange] (6,0) --(6.1,0.2) --(6.2,0) --(6.1,-0.2) --cycle;
        \draw[->, thick, Orange] (6.2,0) --(7, 2) node[midway, above] {\footnotesize \(c_1\)};
        \draw[->, thick, Orange] (6.2,0) --(7, 0) node[midway, above] {\footnotesize \(c_i\)};
        
        \draw[->, thick, Orange] (6.2,0) --(7, -2) node[midway, above] {\footnotesize \(c_n\)};
        \draw[color=gray, very thick] (7.6, -2) circle (0.6cm) node{\(s_1^n\)};
        \fill (7.6,-0.8) circle (1pt);
        \fill (7.6,-1) circle (1pt);
        \fill (7.6,-1.2) circle (1pt);
        \draw[color=gray, very thick] (7.6, 0) circle (0.6cm) node{\(s_1^i\)};
        \fill (7.6,0.8) circle (1pt);
        \fill (7.6,1) circle (1pt);
        \fill (7.6,1.2) circle (1pt);        
        \draw[color=gray, very thick] (7.6, 2) circle (0.6cm)  node{\(s_1^1\)};
    \end{tikzpicture}    
    \caption{Game $G_0$}
  \end{subfigure}
  \caption{A 4-player Game.}
  \label{fig:sqrt}
\end{figure}
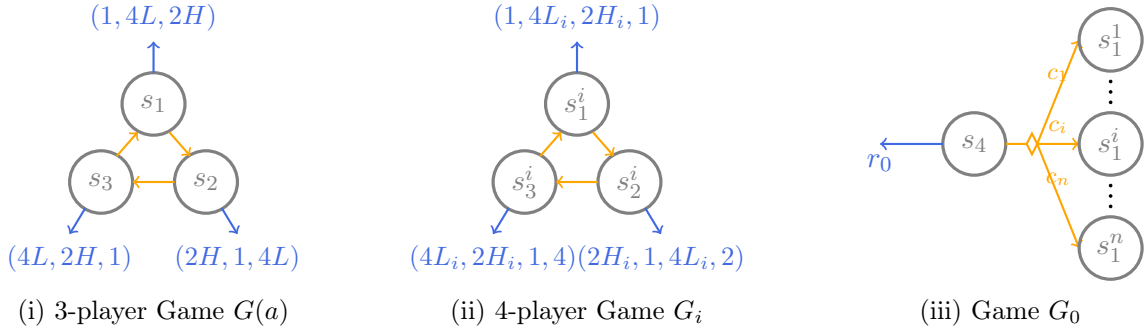

We may describe a stationary strategy profile in $G(a)$ by
$(x_1, x_2, x_3)$, where $x_j$ is the probability that player~$j$
chooses action~0 in state $s_j$. We next analyze the stationary Nash
equilibria in $G(a)$.
\begin{proposition}
  $G(a)$ has a unique stationary Nash equilibrium $(x_1,x_2,x_3)$ given by
  \[
    x_1 = x_2 = x_3 = \frac{35 - \frac{324}{7}a + 9\sqrt{a}}{2(22-
      \frac{162}{7}a - \frac{1}{8})} \enspace .
  \]
\end{proposition}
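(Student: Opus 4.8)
The plan is to characterize the stationary Nash equilibria of $G(a)$ through one-step (best-response) optimality and then reduce the equilibrium conditions to a single quadratic. Write $V_i^{(j)}$ for the payoff to player~$i$ when play starts in $s_j$ under a stationary profile $(x_1,x_2,x_3)$, where $x_j$ is the probability that player~$j$ plays action~$0$ in $s_j$. Since $\gamma=\tfrac12$ and the only nonzero rewards occur on action~$1$ (followed by absorption), the normalized payoffs satisfy the Bellman recursion
\[
  V_i^{(j)} = \tfrac12(1-x_j)\,r_i^{(j)} + \tfrac12 x_j\,V_i^{(j+1)},
\]
with indices modulo~$3$ and absorbing value~$0$, where $r_i^{(j)}$ is player~$i$'s action-$1$ reward in $s_j$, equal to $1,4L,2H$ according as $i-j\equiv 0,1,2\pmod 3$. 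In $s_j$, player~$j$ compares the action-$1$ value $\tfrac12\cdot 1=\tfrac12$ against the action-$0$ value $\tfrac12 V_j^{(j+1)}$, so player~$j$ plays action~$0$, action~$1$, or mixes according as the continuation value $c_j:=V_j^{(j+1)}$ is $>1$, $<1$, or $=1$.

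First I would show that every equilibrium is fully mixed, i.e.\ $c_1=c_2=c_3=1$. The two uniform pure profiles are immediately excluded: if all play action~$0$ the play cycles with reward~$0$, so $c_j=0<1$; if all play action~$1$ then $c_j=H>1$; in both cases some player strictly prefers to switch. A short case analysis on the set of players who play purely, unfolding the recursion above and using the hypotheses $L<1<H$, rules out every remaining profile in which some player does not randomize (for instance, a downstream pair of players fixed on action~$0$ collapses a continuation value to $L<1$, forcing a deviation). This yields the indifference conditions $c_j=1$ for all~$j$, and hence, from the offset-$0$ equation, $V_j^{(j)}=\tfrac12$.

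Writing $\mu_j:=V_{j+1}^{(j)}=2(1-x_j)L+\tfrac14 x_j$ for the neighbour values and substituting into the indifference equations unfolds them into the cyclic system
\[
  1=(1-x_j)H + x_j(1-x_{j+1})L + \tfrac18\, x_j x_{j+1}, \qquad j=1,2,3.
\]
Solving the $j$-th equation for $x_{j+1}$ gives $x_{j+1}=\phi(x_j)$ with $\phi(x)=C_1-C_2/x$, where $C_1=\tfrac{H-L}{\frac18-L}$ and $C_2=\tfrac{H-1}{\frac18-L}>0$ (using $L-\tfrac18<0$, which holds for every integer $a\ge1$). Since $C_2>0$, the map $\phi$ is strictly increasing on $(0,\infty)$, hence so is $\phi^{(3)}$; a strictly increasing map admits no nontrivial $3$-cycle, so any fully mixed equilibrium must satisfy $x_1=x_2=x_3=x$ with $\phi(x)=x$. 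This fixed-point equation is exactly
\[
  \bigl(L-\tfrac18\bigr)x^2 + (H-L)x - (H-1)=0.
\]
Here the structural identity $L+H=9$ is what makes the coefficients tractable; substituting $L=22-\tfrac{162}{7}a$ and $H=\tfrac{162}{7}a-13$, the discriminant collapses to
\[
  (H-L)^2+4\bigl(L-\tfrac18\bigr)(H-1)=\tfrac72\cdot\tfrac{162}{7}a=81a,
\]
so its square root is $9\sqrt a$, and the root lying in $(0,1)$ is
\[
  x=\frac{-(H-L)+9\sqrt a}{2\bigl(L-\tfrac18\bigr)}
   =\frac{35-\tfrac{324}{7}a+9\sqrt a}{2\bigl(22-\tfrac{162}{7}a-\tfrac18\bigr)},
\]
which is the claimed value; the other root exceeds~$1$ and is discarded, and I would finish by verifying $0<x<1$ for the relevant range of~$a$.

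The routine core is the quadratic computation, whose only pleasant feature is the clean discriminant $81a$ forced by $L+H=9$. The main obstacle is establishing \emph{uniqueness}: the monotonicity of $\phi$ dispatches asymmetric fully mixed equilibria cleanly, but excluding the partially pure profiles requires the careful, if elementary, case analysis indicated above, in which the precise roles of the hypotheses $L<1$ and $H>1$ become essential.
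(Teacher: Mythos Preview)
Your proposal is correct and follows essentially the same architecture as the paper: rule out pure strategies, write the three indifference equations as a cyclic system $x_{j+1}=\phi(x_j)$ with $\phi(x)=\frac{H-L}{\frac18-L}+\frac{1-H}{(\frac18-L)x}$, deduce $x_1=x_2=x_3$, and solve the resulting quadratic, whose discriminant collapses to $81a$.

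The one genuine difference is the argument for symmetry. The paper observes that $x=f^{(3)}(x)$ simplifies to a quadratic, hence has at most two roots, so at least two of the $x_j$ coincide and then, by the cycle, all three do. You instead note that $C_2=\frac{H-1}{\frac18-L}>0$ makes $\phi$ strictly increasing on $(0,\infty)$, so a nontrivial $3$-cycle is impossible. Your monotonicity argument is cleaner and avoids the algebra of composing $\phi$ three times. Conversely, for the pure-strategy exclusion the paper is tighter: it uses the cyclic symmetry to reduce to the single case ``player~1 plays purely'' and then chases the implications $x_1=0\Rightarrow x_3=1\Rightarrow\cdots$ (resp.\ $x_1=1\Rightarrow x_3=0\Rightarrow\cdots$) around the cycle to a contradiction, which is shorter than the full case split you outline.
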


\begin{proof}
  First we show no player can use a pure strategy in a Nash
  equilibrium in $G(a)$. Due to the symmetry of $G(a)$, we only show
  this for player~1, and the argument for the other players is
  completely analogous. Consider a stationary Nash equilibrium
  $(x_1,x_2,x_3)$. Suppose that $x_1=0$. Then player~3 receives payoff
  $\frac{1}{2} \cdot \frac{1}{2} \cdot 2H=H/2$ by choosing action~$0$
  and payoff $1/2$ by choosing action~$1$. Since $H>1$, player~3 must
  choose action~0 with probability~$1$, i.e., $x_3=1$. Then, by the
  same argument, we get $x_2=1$, and then again that we must have
  $x_1=1$, contradicting the assumption. Suppose now that
  $x_1=1$. Then player~3 receives payoff $1/2$ by choosing action~$1$
  with probability~1, and strictly less than $1/2$ otherwise, since
  $L<1$. This means that player~3 must choose action~1 with
  probability~1, i.e.\ $x_3=0$. As in the case above, player~2 must
  then choose action~$0$ with probability~$1$, i.e.\ $x_2=1$. But then
  we must have $x_1=0$, contradicting the assumption.

  According to the above analysis, we only need to consider stationary
  strategy profiles where each player is playing a mixed strategy,
  i.e.\ $0<x_i<1$, for $i \in \{1,2,3\}$.  We next want to show
  $x_1 = x_2 = x_3$ in a Nash equilibrium. Since the strategies are
  all mixed, by the definition of a Nash equilibrium, both actions
  must give the same payoff to the controlling player. This means that
  the following three equations must hold:
  \begin{align*}
  \nonumber
    1 - \frac{1}{8} x_2 x_3 = (1 - x_2)H + x_2 (1 - x_3)L \\
    1 - \frac{1}{8} x_3 x_1 = (1 - x_3)H + x_3(1 - x_1)L\\
    1 - \frac{1}{8} x_1 x_2 = (1 - x_1)H + x_1(1 - x_2)L
  \end{align*}

  Define the function
  $f(x) = \frac{H - L}{\frac{1}{8} - L} + \frac{1 - H}{(\frac{1}{8} -
    L)x}$, and let us abbreviate this as $f(x) = k + \frac{b}{x}$.
  The equations above are then equivalent to:
  \[
    x_1 = f(x_3), x_2 = f(x_1), x_3 = f(x_2) \enspace .
  \]

  This means that $x_j = f(f(f(x_j)))$ must hold for
  $j \in \{1,2,3\}$.  In other words, each $x_j$ must be a root of the
  equation
  $x = f^3(x) = k + \frac{b}{k + \frac{b}{k + \frac{b}{x}}} = k +
  \frac{b}{k + \frac{bx}{kx + b}} = k + \frac{bkx + b^2}{k^2x + bx +
    kb}$.  However, there are at most two different roots of this
  equation, which means that at least two of the values $x_j$ must be
  equal. Without loss of generality, let us assume $x_1 = x_2$. Then
  we have $x_1 = x_2 = f(x_1) = f(x_2) = x_3$, and we can conclude
  $x_1 = x_2 = x_3$.

  We show in Appendix \ref{proof-uniqueness} that the equation
  $x = f^3(x)$ has a unique solution $x$ in the open interval $(0,1)$
  which is given as
  \begin{equation}
    x= \frac{35 - \frac{324}{7}a +
      9\sqrt{a}}{2(22-\frac{162}{7}a - \frac{1}{8})} \enspace .
  \label{eq:unique-solution}
  \end{equation}
\end{proof}

The problem \SqrtSum\/ is defined as follows: We are given positive
integers $a_1, \dots a_n$ and $t$, and are to decide whether
$\sum_{i=1}^n\sqrt{a_i} \leq t$. We next reduce \SqrtSum\/ to the
problem of computing a Nash equilibrium in a 4-player game.
\begin{theorem}
\label{thm:sqrt}
Computing a Nash equilibrium for 4-player discounted perfect
information stochastic games is \SqrtSum-hard.
\end{theorem}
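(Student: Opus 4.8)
The plan is to reduce from the \SqrtSum\ problem to computing a stationary Nash equilibrium in a 4-player game, using the 3-player gadget $G(a)$ from Definition~\ref{def:G(a)} as the fundamental building block. The key idea is that $G(a)$ has a unique equilibrium whose probabilities lie in $\QQ(\sqrt{a})$, and by the formula in the preceding Proposition the equilibrium probability $x$ is an affine function of $\sqrt{a}$ with \emph{rational} coefficients. Thus each gadget $G_i$, corresponding to input integer $a_i$, can be made to ``produce'' the value $\sqrt{a_i}$ (up to a fixed rational affine transformation) as an equilibrium quantity. The task is then to aggregate these values and test the inequality $\sum_{i=1}^n \sqrt{a_i} \leq t$ by embedding it into the equilibrium conditions of a single controlling player.

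First I would promote each 3-player gadget $G(a_i)$ to a 4-player gadget $G_i$ by adding a fourth player as a passive observer whose rewards (shown as the extra coordinates $1,4,2$ in Figure~\ref{fig:sqrt}~(ii)) are arranged so that player~4 receives a reward proportional to $\sqrt{a_i}$ precisely when play enters $G_i$. Because players~1,2,3 still face exactly the same strategic situation as in $G(a_i)$ (the fourth player has no action choices inside the gadget), the unique equilibrium of the inner three players is preserved, and the expected payoff delivered to player~4 from gadget $G_i$ becomes a known affine-in-$\sqrt{a_i}$ quantity. Next I would build the aggregator $G_0$ of Figure~\ref{fig:sqrt}~(iii): a single state $s_4$ controlled by player~4 with two actions. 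One action, yielding a controllable rational reward $r_0$ (calibrated against the threshold $t$), leads away; the other action routes play with prescribed rational branching probabilities $c_1,\dots,c_n$ into the entry states $s_1^1,\dots,s_1^n$ of the gadgets $G_1,\dots,G_n$. With the $c_i$ chosen appropriately (and discounting folded in), the expected payoff to player~4 from the second action equals a positive rational multiple of $\sum_{i=1}^n \sqrt{a_i}$ plus a rational constant.

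The crux is then the indifference/best-response analysis at state $s_4$: player~4 compares the fixed payoff $r_0$ from the first action against the payoff from the second, which encodes $\sum_i \sqrt{a_i}$. By tuning $r_0$ so that the comparison threshold sits exactly at $t$, the sign of $\sum_{i=1}^n\sqrt{a_i} - t$ is read off from whether player~4 puts positive probability on entering the gadgets. Crucially, this must be arranged so that player~4's choice does \emph{not} perturb the internal equilibria of the gadgets $G_i$ (which it cannot, since only the entry probability into an already-determined equilibrium changes), so the reduction is well-defined and the \SqrtSum\ answer is recoverable in polynomial time from any stationary Nash equilibrium. One must also verify that rewards can be kept in $[0,1]$ after the normalization, rescaling the affine coefficients as needed, and that all branching probabilities $c_i$ and the reward $r_0$ are polynomial-size rationals computable from the input.

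The step I expect to be the main obstacle is the calibration argument for $r_0$ and the $c_i$ that simultaneously (a)~makes player~4's indifference point coincide \emph{exactly} with the threshold $\sum_i \sqrt{a_i} = t$ rather than with some perturbed value, and (b)~guarantees that player~4's deviation cannot create an alternative equilibrium with a different (spurious) reading of the inequality. Handling the boundary case of equality $\sum_i \sqrt{a_i} = t$ cleanly---where player~4 is genuinely indifferent and may mix---requires care, since \SqrtSum\ is a decision problem and the reduction must map the ``$\leq$'' boundary to a distinguishable equilibrium behavior; I would address this by a standard strict-inequality shift of $r_0$ (exploiting that the values involved are algebraic of bounded degree and height, so a sufficiently small but polynomially-bounded perturbation separates the two cases), mirroring the usual treatment of \SqrtSum-hardness reductions.
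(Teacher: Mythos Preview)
Your overall construction mirrors the paper's proof almost exactly: promote each $G(a_i)$ to a 4-player gadget $G_i$ with player~4 as a passive observer, aggregate via a state $s_4$ controlled by player~4, and read off the sign of $\sum_i \sqrt{a_i} - t$ from player~4's best response. One technical point you gloss over is that player~4's equilibrium payoff from $G_i$ is not \emph{literally} an affine function of $\sqrt{a_i}$ in the obvious sense---in the paper's setup it is $\frac{8-8x^3}{8-x^3}$ where $x$ is the equilibrium probability---so one must argue separately that this value lies in $\QQ(\sqrt{a_i})\setminus\QQ$ and then fix the sign of the $\sqrt{a_i}$-coefficient (the paper does this by allowing negation of player~4's rewards in each gadget). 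This is a detail you could fill in.

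The genuine gap is your treatment of the boundary case $\sum_i\sqrt{a_i}=t$. You propose a ``strict-inequality shift'' of $r_0$ by a ``sufficiently small but polynomially-bounded perturbation,'' invoking algebraic height/degree bounds. This does not work as stated: the degree of $\sum_i\sqrt{a_i}$ over $\QQ$ can be $2^n$, and the known root-separation lower bounds on $\lvert\sum_i\sqrt{a_i}-t\rvert$ when nonzero are only doubly exponential in $n$. A polynomial-size separation bound would place \SqrtSum\ in $\PTIME$ via numerical approximation, so assuming one exists would make the hardness you are trying to prove vacuous. The paper handles the boundary differently: it first tests in polynomial time whether $\sum_i\sqrt{a_i}=t$ holds \emph{exactly}, using the classical fact that square roots of integers are linearly independent over $\QQ$ once pairwise products that are perfect squares have been collapsed (so a sum $\sum c_i\sqrt{r_i}$ vanishes iff all surviving coefficients do). With equality excluded up front, player~4 has a strict preference at $s_4$ in every stationary Nash equilibrium and must play a pure action, from which the \SqrtSum\ answer is read off directly.
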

\begin{proof}
  First we shall in polynomial time check whether
  $\sum_{i=1}^n\sqrt{a_i} = t$ and if so obtain the answer
  directly. The fact that this is possible has been attributed to
  Borodin~et~al.~\cite{JSC:BorodinFHT1985-decreasing-nesting-depth}
  by, for instance, Tiwari~\cite{JC:Tiwari92} and by Etessami and
  Yannakakis~\cite{SICOMP:EtessamiY2010-FIXP}. For completeness, we
  outline the argument below.

  In general, we may in polynomial time check whether an expression of
  the form $c_1\sqrt{r_1}+\dots+c_n\sqrt{r_n}$ is equal to~0, where
  $c_1,\dots,c_n$ are integers and $r_1,\dots,r_n$ are positive
  integers. If the radicals $\sqrt{r_1},\dots,\sqrt{r_n}$ are linearly
  independent over \QQ, the expression is equal to~0 only when
  $c_1=\dots=c_n=0$. The radicals are linearly independent if and only
  if for any pair $i\neq j$, the product $r_ir_j$ is not a perfect
  square. If in fact there exist $i\neq j$ such that $r_i r_j$ is a
  perfect square, which may be checked and found in polynomial time,
  we can rewrite the expression using the identity
  $c_i \sqrt{r_i} + c_j \sqrt{r_j} = (c_i + (\sqrt{r_ir_j}c_j)/r_i)\sqrt{r_i}$
  into an expression with fewer terms, and repeat.
  
  In the following we shall thus assume that
  $\sum_{i=1}^n\sqrt{a_i} = t$.  For each $a_i$, we construct the game
  $G_i$ based on $G(a_i)$ by setting rewards for player~4 as
  follows. If player~1 chooses action~1, player~4 obtains reward $1$.
  If player~2 chooses action~1, player~4 obtains reward $2$, and
  finally, if player~3 chooses action~1, player~4 obtains reward
  $4$. The game is illustrated in Figure~\ref{fig:sqrt}~(ii). Let
  $\overline{G_i}$ be the game obtained from $G_i$ by negating the
  rewards of player~4.

  The game $G$ is formed from picking, for every $i$, the game $G_i$
  or $\overline{G_i}$, together with $G_0$ as illustrated in
  Figure~\ref{fig:sqrt}~(iii). Player~4 only controls one state $s_4$
  and has two actions. If player~4 chooses action~0, the game moves
  with probability $c_i$ to the state $s_1^i$ in $G_i$.  Otherwise,
  player~4 obtains some reward $r_0$, and the game enters into an
  absorbing state with reward~0.

  We first analyze the payoff of player~4 when starting play in state $s_1^i$ of $G_i$.
  \begin{lemma}
    \label{lemma:sqrt}
    If the game $G_i$ starts at state $s_1^i$ and players~1,2, and 3
    follow their Nash equilibrium strategies, player~4 would obtain
    payoff $\frac{1}{2}(p_i + q_i \sqrt{a_i})$, where $p_i,q_i$ are
    rational numbers and $q_i > 0$.
  \end{lemma}
  The proof is given in Appendix~\ref{apdx:lemma:sqrt}. Based on
  Lemma~\ref{lemma:sqrt}, we define $C = \prod_{i=1}^nq_i$ and
  $D = \sum_{i=1}^nd_i$ where $d_i = C/q_i$. Let $c_i = d_i/D$, then
  $\sum_{i=1}^n c_i = 1$ and $0 < c_i < 1$. Finally, let $r_0$ be
  $\frac{1}{2} \sum_{i=1}^n c_ip_i + \frac{1}{2} (C/D)t$.

  Now, consider a stationary Nash equilibrium. If player~4 chooses
  action~1 starting in state $s_4$, the payoff obtained is exactly
  $V_1=\frac{1}{2}r_0 = \frac{1}{4} \sum_{i=1}^n c_ip_i + \frac{1}{4} (C/D)t$. If player~4 chooses action~0, the game moves with
  probability $c_i$ to the state $s_1^i$ in $G_i$. In this case,
  player~4 receives payoff
  \begin{equation}
    \begin{aligned}
    \nonumber
      V_0 = \frac{1}{2} \sum_{i=1}^n \frac{1}{2} \left(c_i(p_i + q_i\sqrt{a_i})\right) &= \frac{1}{4} \sum_{i=1}^n c_ip_i + \frac{1}{4} \sum_{i = 1}^n (d_i/D)q_i\sqrt{a_i} \\
        &= \frac{1}{4} \sum_{i=1}^nc_ip_i + \frac{1}{4} (C/D) \sum_{i=1}^n \sqrt{a_i}. \\
\end{aligned}
  \end{equation}

  Thus, $\sum_{i=1}^n\sqrt{a_i} < t$ if and only if $V_0 < V_1$ and
  $\sum_{i=1}^n\sqrt{a_i} > t$ if and only if $V_0 > V_1$. Since we
  assume $\sum_{i=1}^n\sqrt{a_i} \neq t$, player~4 must choose either
  action~0 or action~1 with probability~1, and this completes the
  reduction.
\end{proof}

\section{Conclusion}
We proved that for two-player perfect-information stochastic games,
the problem of computing a stationary Nash equilibrium is in
$\PPAD$. This leads to the interesting question of whether one may
develop an algorithm for this task, for instance based on Lemke's
algorithm, that may work well in practice. To complement this, we gave
an improved and simplified proof of $\PPAD$-hardness of computing
stationary $\eps$-Nash equilibria. While we give hardness for a
concrete value of $\eps$, it is still quite small. Improving this
bound further is another interesting problem. Probably the main
problem left open in our work is the precise computational complexity
of computing stationary Nash equilibria in perfect information games
with $3$ or more players. We proved the problem to be $\SqrtSum$-hard
in 4-player games, and leave open the question of whether the problem
may be $\FIXP$-hard.

\section*{Acknowledgements}
We thank Vidya Muthukumar for several helpful discussions.

\bibliographystyle{abbrv}
\bibliography{references}

\begin{thebibliography}{10}

\bibitem{ISAAC:AnderssonM2009-games-on-graphs}
D.~Andersson and P.~B. Miltersen.
\newblock The complexity of solving stochastic games on graphs.
\newblock In {\em {ISAAC}}, volume 5878 of {\em Lecture Notes in Computer
  Science}, pages 112--121. Springer, 2009.

\bibitem{STOC:BatziouFGMS2025-monotone-contractions}
E.~Batziou, J.~Fearnley, S.~Gordon, R.~Mehta, and R.~Savani.
\newblock Monotone contractions.
\newblock In {\em {STOC}}, pages 507--517. {ACM}, 2025.

\bibitem{JSC:BorodinFHT1985-decreasing-nesting-depth}
A.~Borodin, R.~Fagin, J.~E. Hopcroft, and M.~Tompa.
\newblock Decreasing the nesting depth of expressions involving square roots.
\newblock {\em Journal of Symbolic Computation}, 1(2):169--188, 1985.

\bibitem{JACM:ChenDT2009-Nash}
X.~Chen, X.~Deng, and S.-H. Teng.
\newblock Settling the complexity of computing two-player {N}ash equilibria.
\newblock {\em Journal of the ACM}, 56(3):14:1--14:57, 2009.

\bibitem{IC:Condon1992-ssg}
A.~Condon.
\newblock The complexity of stochastic games.
\newblock {\em Information and Computation}, 96(2):203--224, 1992.

\bibitem{SICOMP:DaskalakisGP2009-Nash}
C.~Daskalakis, P.~W. Goldberg, and C.~H. Papadimitriou.
\newblock The complexity of computing a {N}ash equilibrium.
\newblock {\em SIAM Journal on Computing}, 39(1):195--259, 2009.

\bibitem{COLT:DaskalakisGZ2023-complexity-stochastic-games}
C.~Daskalakis, N.~Golowich, and K.~Zhang.
\newblock The complexity of {Markov} equilibrium in stochastic games.
\newblock In G.~Neu and L.~Rosasco, editors, {\em Proceedings of Thirty Sixth
  Conference on Learning Theory}, volume 195 of {\em Proceedings of Machine
  Learning Research}, pages 4180--4234. PMLR, 12--15 Jul 2023.

\bibitem{DeligkasFHM22-Pure-Circuit}
A.~Deligkas, J.~Fearnley, A.~Hollender, and T.~Melissourgos.
\newblock Pure-circuit: Strong inapproximability for {PPAD}.
\newblock In {\em {FOCS}}, pages 159--170. {IEEE}, 2022.

\bibitem{NSR:DengLMWY2022-complexity-Markov-perfect}
X.~Deng, N.~Li, D.~Mguni, J.~Wang, and Y.~Yang.
\newblock On the complexity of computing {Markov} perfect equilibrium in
  general-sum stochastic games.
\newblock {\em National Science Review}, 10(1):nwac256, 11 2022.

\bibitem{SICOMP:EtessamiY2010-FIXP}
K.~Etessami and M.~Yannakakis.
\newblock On the complexity of {N}ash equilibria and other fixed points.
\newblock {\em SIAM Journal on Computing}, 39(6):2531--2597, 2010.

\bibitem{JCSS:FearnleyGMS2020-UEOPL}
J.~Fearnley, S.~Gordon, R.~Mehta, and R.~Savani.
\newblock Unique end of potential line.
\newblock {\em Journal of Computer and System Sciences}, 114:1--35, 2020.

\bibitem{book:Filar-Vrieze-1996}
J.~Filar and K.~Vrieze.
\newblock {\em Competitive {Markov} decision processes}.
\newblock Springer-Verlag, Berlin, Heidelberg, 1996.

\bibitem{JOTA:Filar1981-switching-controller}
J.~A. Filar.
\newblock Ordered field property for stochastic games when the player who
  controls transitions changes from state to state.
\newblock {\em Journal of Optimization Theory and Applications},
  34(4):503--515, 1981.

\bibitem{SICOMP:Filos-RatsikasH2023-FIXP}
A.~Filos-Ratsikas, K.~A. Hansen, K.~H{\o}gh, and A.~Hollender.
\newblock {FIXP}-membership via convex optimization: Games, cakes, and markets.
\newblock {\em SIAM Journal on Computing}, pages FOCS21--30--FOCS21--84, 2023.

\bibitem{STOC:Filos-RatsikasH2024-PPAD}
A.~Filos{-}Ratsikas, K.~A. Hansen, K.~H{\o}gh, and A.~Hollender.
\newblock {PPAD}-membership for problems with exact rational solutions: {A}
  general approach via convex optimization.
\newblock In {\em {STOC}}, pages 1204--1215. {ACM}, 2024.

\bibitem{JSHUA:Fink1964}
A.~M. Fink.
\newblock Equilibrium in a stochastic $n$-person game.
\newblock {\em J. Sci. Hiroshima Univ. Ser. A-I Math.}, 28(1):89--93, 1964.

\bibitem{ITCS:JinMS2023-complexity-stochastic-games}
Y.~Jin, V.~Muthukumar, and A.~Sidford.
\newblock The complexity of infinite-horizon general-sum stochastic games.
\newblock In {\em {ITCS}}, volume 251 of {\em LIPIcs}, pages 76:1--76:20.
  Schloss Dagstuhl - Leibniz-Zentrum f{\"{u}}r Informatik, 2023.

\bibitem{MS:Lemke1965-Lemke-Algorithm}
C.~E. Lemke.
\newblock Bimatrix equilibrium points and mathematical programming.
\newblock {\em Management Science}, 11(7):681--689, 1965.

\bibitem{book:NeymanSorin-2003}
A.~Neyman and S.~Sorin.
\newblock {\em Stochastic Games and Applications}.
\newblock Kluwer Academic Publishers, nato asi series edition, 2003.

\bibitem{JOTA:ParthasarathyR1981-single-controller}
T.~Parthasarathy and T.~E.~S. Raghavan.
\newblock An orderfield property for stochastic games when one player controls
  transition probabilities.
\newblock {\em Journal of Optimization Theory and Applications},
  33(3):374--392, 1981.

\bibitem{SICOMP:Rubinstein2018-inapprox-NE}
A.~Rubinstein.
\newblock Inapproximability of {Nash} equilibrium.
\newblock {\em SIAM Journal on Computing}, 47(3):917--959, 2018.

\bibitem{PNAS:Shapley1953}
L.~S. Shapley.
\newblock Stochastic games.
\newblock {\em Proceedings of the National Academy of Sciences},
  39(10):1095--1100, 1953.

\bibitem{JSHUA:Takahashi1964}
M.~Takahashi.
\newblock Equilibrium points of stochastic non-cooperative $n$-person games.
\newblock {\em Journal of Science of the Hiroshima University Series A-I
  (Mathematics)}, 28(1):95--99, 1964.

\bibitem{JC:Tiwari92}
P.~Tiwari.
\newblock A problem that is easier to solve on the unit-cost algebraic {RAM}.
\newblock {\em J. Complex.}, 8(4):393--397, 1992.

\bibitem{NIPS:ZinkevichGL2005-cyclic-equilibria}
M.~Zinkevich, A.~Greenwald, and M.~Littman.
\newblock Cyclic equilibria in {Markov} games.
\newblock In Y.~Weiss, B.~Sch\"{o}lkopf, and J.~Platt, editors, {\em Advances
  in Neural Information Processing Systems}, volume~18. MIT Press, 2005.

\end{thebibliography}

\appendix

\section{Direct construction of a PL pseudo-circuit for solving the Selection Feasibility Program}
\label{sec:direct-PL-pseudo-circuit}

Here we present a direct construction of a PL pseudo-circuit that solves the form of feasibility program used in the proof of Proposition~\ref{prop:selection-circuit}. 

\begin{proposition}
    There exists a PL pseudo-circuit computing the correspondence
    $G \colon [0,1]^4 \rightrightarrows [0,1]$ defined by
    \[
        G(x_1,x_2,y_1,y_2) = \begin{cases}
            y_1 & \text{if } x_1 > x_2\\
            y_2 & \text{if } x_1 < x_2\\
            [\min(y_1,y_2),\max(y_1,y_2)] & \text{if } x_1=x_2
        \end{cases}    
    \]
\end{proposition}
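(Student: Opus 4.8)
The plan is to realize the discontinuous selection behaviour of $G$ through the fixed-point condition on a single auxiliary ``selector'' variable, and then read off the output as an \emph{ordinary} (continuous) PL function of that selector and the inputs. Since a genuine PL arithmetic circuit computes only continuous functions, the jump of $G$ at $x_1 = x_2$ cannot be produced directly; it must be injected by the pseudo-circuit requirement that an auxiliary coordinate be a fixed point. Concretely, I would introduce one auxiliary variable $s \in [0,1]$ and update it by
\[
  s' = \max\!\big(0,\ \min(1,\, s + x_1 - x_2)\big).
\]
The first step is a ``selector lemma'': at any fixed point $s'=s$ one has $s=1$ when $x_1>x_2$, $s=0$ when $x_1<x_2$, and $s\in[0,1]$ arbitrary when $x_1=x_2$. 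This is a short case analysis: if $x_1-x_2>0$ then $s+(x_1-x_2)$ strictly exceeds $s$ unless capped at $1$, forcing $s=1$; the case $x_1-x_2<0$ is symmetric; and $x_1=x_2$ leaves the update as the identity, so every $s\in[0,1]$ is a fixed point.

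The second step is to convert the selector into the output without ever multiplying two variables, exploiting that scaling by a fixed rational constant is an allowed PL operation. Fixing $K=2$, the circuit forms the two affine quantities $a = y_1 - K + K s$ and $b = y_2 - K s$, takes $\max(a,b)$, and clamps the result to $[\min(y_1,y_2),\max(y_1,y_2)]$, producing
\[
  w = \max\!\Big(\min(y_1,y_2),\ \min\!\big(\max(y_1,y_2),\ \max(a,b)\big)\Big).
\]
Every gate lies in $\{+,-,\max,\min,\times\zeta\}$, the only multiplications being $Ks$ (by a constant), and $w\in[\min(y_1,y_2),\max(y_1,y_2)]\subseteq[0,1]$, as required. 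The full circuit $F(x_1,x_2,y_1,y_2,s)=(w,s')$ is thus a PL pseudo-circuit with one auxiliary variable and constant size; its output is any $w$ admitting a fixed point $s'=s$.

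Finally I would verify the three cases. When $s=1$ one has $a=y_1$ and $b=y_2-2\le 0\le y_1$, so $\max(a,b)=y_1$, and since $y_1$ lies in the clamping interval, $w=y_1$; the case $s=0$ gives $w=y_2$ symmetrically. The delicate case, which I expect to be the main obstacle, is $x_1=x_2$, where $s$ ranges freely over $[0,1]$ and $w$ must sweep \emph{exactly} $[\min(y_1,y_2),\max(y_1,y_2)]$. Here I would analyse $s\mapsto\max(a,b)$: it is the maximum of an increasing and a decreasing affine function of $s$, hence convex, taking the values $y_1$ and $y_2$ at the endpoints $s=1$ and $s=0$ and attaining its minimum $\tfrac{y_1+y_2}{2}-1$ at the crossing point, so it ranges over $[\tfrac{y_1+y_2}{2}-1,\ \max(y_1,y_2)]$. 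The choice $K=2$ forces this minimum strictly below $\min(y_1,y_2)$, since $\tfrac{y_1+y_2}{2}-\min(y_1,y_2)=\tfrac{|y_1-y_2|}{2}\le\tfrac12$ gives $\tfrac{y_1+y_2}{2}-1\le\min(y_1,y_2)-\tfrac12<\min(y_1,y_2)$. Thus the unclamped expression already covers the target interval from below while its top is exactly $\max(y_1,y_2)$, so the upper clamp is inactive and the clamped output sweeps precisely $[\min(y_1,y_2),\max(y_1,y_2)]$ as $s$ varies. This shows the correspondence computed by $F$ is exactly $G$.
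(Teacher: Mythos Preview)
Your proof is correct and takes a genuinely different route from the paper's.

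The paper makes the auxiliary variable $z$ \emph{be} the output: it computes four quantities $\Delta_1,\dots,\Delta_4$ that depend on the signs of \emph{both} $x_1-x_2$ and $y_1-y_2$, and updates $\tilde z = \operatorname{clamp}_{[\min(y_1,y_2),\max(y_1,y_2)]}(z+\Delta_1-\Delta_2+\Delta_3-\Delta_4)$. At a fixed point the drift must vanish (or be absorbed by the clamp), and a short sign analysis gives the three cases directly. By contrast, you decouple the two comparisons: your auxiliary $s$ is a pure selector driven only by $x_1-x_2$, and then a separate continuous PL expression in $s,y_1,y_2$ produces the output. Your selector update is simpler than the paper's four-term drift, but you then need the extra ingredient of the affine pair $a,b$ with the constant $K=2$ and the V-shape/coverage argument to show the output sweeps exactly $[\min(y_1,y_2),\max(y_1,y_2)]$ when $x_1=x_2$; the paper avoids this because there the clamped variable \emph{is} the output and the clamp interval is the target interval itself. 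Both constructions use a single auxiliary variable and constant size; your modular ``selector then readout'' pattern is arguably more reusable, while the paper's all-in-one drift makes the fixed-point analysis marginally shorter.
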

\begin{proof}
    The circuit has an auxiliary input $z$ in addition to the inputs $x_1,x_2,y_1,y_2$. First the circuit computes $\Delta_i$ for $i \in [4]$ by
    \begin{align*}
        \Delta_1 = \max(0,\min(x_1-x_2,y_1-y_2)\\
        \Delta_2 = \max(0,\min(x_1-x_2,y_2-y_1)\\
        \Delta_3 = \max(0,\min(x_2-x_1,y_2-y_1)\\
        \Delta_4 = \max(0,\min(x_2-x_1,y_1-y_2)\\
    \end{align*}
    Note that $\Delta_i \geq 0$ for all $i \in [4]$, and for at most one $i$ we have that $\Delta_i > 0$.     
    The output of the circuit is then simply the transformation of the auxiliary input $z$ given by
    \[
        \tilde{z}=\max\left( \min(y_1,y_2), \min\left( \max(y_1,y_2), z+\Delta_1-\Delta_2+\Delta_3-\Delta_4\right)\right)
    \]
    We now analyze for which values of $z$ we have $\tilde{z}=z$,
    i.e., for which cases the auxiliary input $z$ is a fixed
    point. Note that any such $z$ must be contained in the interval
    $[\min(y_1,y_2),\max(y_1,y_2)]$ by construction of the circuit.
    
    In case $x_1=x_2$ we have $\Delta_i=0$ for all $i\in [4]$ and thus
    any $z \in [\min(y_1,y_2),\max(y_1,y_2)]$ is a fixed point. In
    case $x_1>x_2$ the only fixed point should be $y_1$, and this
    follows by noting that $\sgn(\Delta_1-\Delta_2) =
    \sgn(y_1-y_2)$. Finally, in case $x_1<x_2$ the only fixed point
    should be $y_2$, and this follows by noting that
    $\sgn(\Delta_3-\Delta_4) = \sgn(y_2-y_1)$.

\end{proof}

\section{Detailed analysis for Theorem~\ref{thm:epsilon}}
\subsection{\NOT\ gates}
\label{apdx：NOT-reduction}
\paragraph{\bf \NOT\ gates.}
To ensure Condition~\ref{not:con1} holds, we need to rule out any
$\eps$-Nash equilibrium having $p_u \leq l$ and $p_v<r$. To do this,
we ensure that by switching strategy to play action~1, player~1 would
increase the payoff by more than~$\eps$. In other words we will ensure
that
\[
  V = p_vV_1 + (1-p_v)V_0 \leq p_v \frac{1}{4} r_{\auxnot2} + (1-p_v)\left(\frac{1}{4}p_u+\frac{1}{12}\right) < \frac{1}{4} r_{\auxnot2} - \eps = V_1 - \eps
\]
which is equivalent to
\[
  p_u + \frac{1}{3} + \frac{4\eps}{1-p_v} < r_{\auxnot2}
\]
which in turn is implied by having
\[
  l + \frac{1}{3} + \frac{4\eps}{1-r} \leq r_{\auxnot2}
\]

To ensure Condition~\ref{not:con2} holds, we need to rule out any
$\eps$-Nash equilibrium having $p_u \geq r$ and $p_v > l$. To do this,
we ensure that by switching strategy to play action~0, player~1 would
increase the payoff by more than~$\eps$. In other words we will ensure
that
\[
  V = p_vV_1 + (1-p_v)V_0 = p_v \frac{1}{4} r_{\auxnot2} + (1-p_v)V_0 < V_0 - \eps
\]
which is equivalent to
\[
  r_{\auxnot2} < 4 V_0 - \frac{4\eps}{p_v}
\]
Using the lower bound on $V_0$ this is implied by having
\[
  r_{\auxnot2} < p_u - \frac{4\eps}{p_v} 
\]
which in turn is implied by having
\[
  r_{\auxnot2} \leq r - \frac{4\eps}{l} 
\]
Combining these, we conclude that whenever
$l + \frac{1}{3} + \frac{4\eps}{1-r} < r - \frac{4\eps}{l}$ we may
choose a rational constant reward $r_{\auxnot2}$ such that the \NOT\ gate
conditions are satisfied.

\subsection{\OR\ Gates}
\label{apdx：OR-reduction}
\paragraph{\bf \OR\ gates.}
To ensure Condition~\ref{or:con1} holds, we need to rule out any
$\eps$-Nash equilibrium having $p_{u_0} + p_{u_1} \leq 2l$ and $p_v>l$. To do this,
we ensure that by switching strategy to play action~0, player~1 would
increase the payoff by more than~$\eps$. In other words we will ensure
that
\[
  V = p_vV_1 + (1-p_v)V_0 \leq p_v \left(\frac{1}{8}(p_{u_0} + p_{u_1}) + \frac{1}{12}\right) + 
  (1 - p_v)\frac{1}{4}r_{\auxor2} < \frac{1}{4}r_{\auxor2} - \epsilon = V_0 - \eps
\]
which is equivalent to
\[
  \frac{1}{2}(p_{u_0} + p_{u_1}) + \frac{1}{3} + \frac{4\eps}{p_v} < r_{\auxor2}
\]
which in turn is implied by having
\[
  l + \frac{1}{3} + \frac{4\eps}{l} \leq r_{\auxor2}
\]

To ensure Condition~\ref{or:con2} holds, we need to rule out any
$\eps$-Nash equilibrium having $p_{u_0} + p_{u_1} \geq r$ and $p_v < r$. To do this,
we ensure that by switching strategy to play action~1, player~1 would
increase the payoff by more than~$\eps$. In other words we will ensure
that
\[
  V = p_vV_1 + (1-p_v)V_0 = p_v V_1 + (1-p_v)\frac{1}{4}r_{\auxor2} < V_1 - \eps
\]
which is equivalent to
\[
  r_{\auxor2} < 4 V_1 - \frac{4\eps}{1 - p_v}
\]
Using the lower bound on $V_1$ this is implied by having
\[
  r_{\auxor2} < \frac{1}{2}(p_{u_0} + p_{u_1}) - \frac{4\eps}{1 - p_v} 
\]
which in turn is implied by having
\[
  r_{\auxor2} \leq \frac{1}{2}r - \frac{4\eps}{1 - r} 
\]
Combining these, we conclude that whenever
$l + \frac{1}{3} + \frac{4\eps}{l} < \frac{1}{2}r - \frac{4\eps}{1-r}$ we may
choose a rational constant reward $r_{\auxor2}$ such that the \OR\ gate
conditions are satisfied.

\subsection{\PURIFY\ Gates}
\label{apdx：PURIFY-reduction}
\paragraph{\bf \PURIFY\ gates.}
To ensure Condition~\ref{pur:con1} holds, we need to rule out any $\eps$-Nash equilibrium having that
$p_u \leq m$ and $p_{v_0} > l$. To do this,
we ensure that by switching strategy to play action~0, player~1 would
increase the payoff by more than~$\eps$. In other words we will ensure
that
\[
  V = p_{v_0}V_1 + (1-p_{v_0})V_0 \leq p_{v_0} \left(\frac{1}{4}p_u + \frac{1}{12}\right) + 
  (1 - p_{v_0})\frac{1}{4}r_{\auxpur20} < \frac{1}{4}r_{\auxpur20} - \epsilon = V_0 - \eps
\]
which is equivalent to
\[
  p_u + \frac{1}{3} + \frac{4\eps}{p_{v_0}} < r_{\auxpur20}
\]
which in turn is implied by having
\[
  m + \frac{1}{3} + \frac{4\eps}{l} \leq r_{\auxpur20}
\]

To ensure Condition~\ref{pur:con2} holds, we need to rule out any $\eps$-Nash equilibrium having that  $p_u \geq r$ and $p_{v_0} < r$. To do this,
we ensure that by switching strategy to play action~1, player~1 would
increase the payoff by more than~$\eps$. In other words we will ensure
that
\[
  V = p_{v_0}V_1 + (1-p_{v_0})V_0 = p_{v_0} V_1 + (1-p_{v_0})\frac{1}{4}r_{\auxpur20} < V_1 - \eps
\]
which is equivalent to
\[
  r_{\auxpur20} < 4 V_1 - \frac{4\eps}{1 - p_{v_0}}
\]
Using the lower bound on $V_1$ this is implied by having
\[
  r_{\auxpur20} < p_u - \frac{4\eps}{1 - p_{v_0}} 
\]
which in turn is implied by having
\[
  r_{\auxpur20} \leq r - \frac{4\eps}{1 - r} 
\]
Combining these, we conclude that whenever
$m + \frac{1}{3} + \frac{4\eps}{l} < r - \frac{4\eps}{1-r}$ we may
choose a rational constant reward $r_{\auxpur20}$.

To ensure Condition~\ref{pur:con3} holds, we need to rule out any $\eps$-Nash equilibrium having that
$p_u \leq l$ and $p_{v_1} > l$. To do this,
we ensure that by switching strategy to play action~0, player~1 would
increase the payoff by more than~$\eps$. In other words we will ensure
that
\[
  V = p_{v_1}V_1 + (1-p_{v_1})V_0 \leq p_{v_1} \left(\frac{1}{4}p_u + \frac{1}{12}\right) + 
  (1 - p_{v_1})\frac{1}{4}r_{\auxpur21} < \frac{1}{4}r_{\auxpur21} - \epsilon = V_0 - \eps
\]
which is equivalent to
\[
  p_u + \frac{1}{3} + \frac{4\eps}{p_{v_1}} < r_{\auxpur21}
\]
which in turn is implied by having
\[
  l + \frac{1}{3} + \frac{4\eps}{l} \leq r_{\auxpur21}
\]

To ensure Condition~\ref{pur:con4} holds, we need to rule out any $\eps$-Nash equilibrium having that  $p_u \geq m$ and $p_{v_1} < r$. To do this,
we ensure that by switching strategy to play action~1, player~1 would
increase the payoff by more than~$\eps$. In other words we will ensure
that
\[
  V = p_{v_1}V_1 + (1-p_{v_1})V_0 = p_{v_1} V_1 + (1-p_{v_1})\frac{1}{4}r_{\auxpur21} < V_1 - \eps
\]
which is equivalent to
\[
  r_{\auxpur21} < 4 V_1 - \frac{4\eps}{1 - p_{v_1}}
\]
Using the lower bound on $V_1$ this is implied by having
\[
  r_{\auxpur21} < p_u - \frac{4\eps}{1 - p_{v_1}} 
\]
which in turn is implied by having
\[
  r_{\auxpur21} \leq m - \frac{4\eps}{1 - r} 
\]
Combining these, we conclude that whenever
$l + \frac{1}{3} + \frac{4\eps}{l} < m - \frac{4\eps}{1-r}$ we may
choose a rational constant reward $r_{\auxpur21}$ such that the \PURIFY\ gate
conditions are satisfied.

\subsection{Proof of Lemma~\ref{lemma:epsilon}}
\label{apdx:lemma:epsilon}
\begin{proof}
    Following from all above conditions     \eqref{eq:bound1}, \eqref{eq:bound2},
    \eqref{eq:bound3}, \eqref{eq:bound4} and
    $0 \leq l < m < r \leq 1$, we
    have
    \begin{equation}
    \begin{aligned}
    \left(\frac{1}{1-r} + \frac{1}{l}\right)2\epsilon &<
      \min
      \left\{
      \frac{1}{2}r - \frac{1}{2}m - \frac{1}{6},
      \frac{1}{2}m - \frac{1}{2}l - \frac{1}{6},
      \frac{1}{4}r - \frac{1}{2}l - \frac{1}{6},
      \frac{1}{2}r - \frac{1}{2}l - \frac{1}{6}
                                            \right\}\\
      &=\min
      \left\{
      \frac{1}{2}r - \frac{1}{2}m - \frac{1}{6},
      \frac{1}{2}m - \frac{1}{2}l - \frac{1}{6},
      \frac{1}{4}r - \frac{1}{2}l - \frac{1}{6}\right\}                                      
    \end{aligned}
    \end{equation}

    We are looking for an upper bound of $\epsilon,$ thus,
    \begin{equation}
    \begin{aligned}
    2\epsilon &<
      \max_{l,r,m} \left[\frac{l(1-r)}{l+1-r}\min
      \left\{
      \frac{1}{2}r - \frac{1}{2}m - \frac{1}{6},
      \frac{1}{2}m - \frac{1}{2}l - \frac{1}{6},
      \frac{1}{4}r - \frac{1}{2}l - \frac{1}{6} \right\} \right]\\
      &\overset{1}{=}\max_{l,r,m} \left[\frac{l(1-r)}{l+1-r}\min
      \left\{
      \frac{1}{4}r - \frac{1}{4}l - \frac{1}{6},
      \frac{1}{4}r - \frac{1}{2}l - \frac{1}{6} \right\} \right]\\
             &=\max_{l,r} \frac{l(1-r)}{l+1-r}\left(\frac{1}{4}r - \frac{1}{2}l - \frac{1}{6}\right)\\
             & \overset{2}{=} \frac{3 - 2 \sqrt{2}}{144}.                             
    \end{aligned}
  \end{equation}

Step~1 holds because the maximum of minimum between $\frac{1}{2}r - \frac{1}{2}m - \frac{1}{6}$ and $\frac{1}{2}m - \frac{1}{2}l - \frac{1}{6}$ is attained if and only if the two values are equal. The following shows the detailed steps for Step~2. Our goal is to find the maximum of the function $f(l,r)$ where $0 \leq l < r \leq 1.$ The function $f(l,r)$ is simplified as follows.

  \[
  \begin{aligned}
    f(l,r) &= \frac{1}{12} \frac{(3r - 6l - 2)l(1-r)}{1-r+l} \\
    &\overset{ \scriptstyle t=1-r+l}{=} \frac{1}{12} \frac{(-3t-3l+1)l(t-l)}{t} = \frac{1}{12}\left(\frac{3l^3-l^2}{t} - 3lt + l\right)
  \end{aligned}
  \]

  From the ranges of $l$ and $r$, it follows that $0 \leq l \leq t < 1.$ We treat $l$ in the above function as a constant and discuss the maximum value of the function under different cases based on the value of $l.$ Thus, we rewrite the function $f$ as $h(t),$ with $h'(t)$ representing the derivative of $h$ with respect to the variable $t.$
  \begin{enumerate}
  \item $0 \leq l \leq \frac{1}{3}:$ In this case, $h'(t) = \frac{1}{12}(-\frac{3l^3 - l^2}{t^2} - 3l).$ The function $h'(t) \geq 0$ if and only if $3t^2 \leq l - 3l^2.$ We conduct a more detailed discussion on the values of $l.$
    \begin{enumerate}[label=\roman*.]
    \item $0 \leq l \leq \frac{1}{6}:$ When $t = \sqrt{\frac{l}{3} - l^2},$ we obtain the maximum value of $h(t)$ which equals to $\frac{l}{12} - \frac{l}{2}\sqrt{\frac{l}{3}-l^2}.$
    \item $\frac{1}{6} < l \leq \frac{1}{3}:$ It's easy to see that $3t^2 \geq 3l^2 > l - 3l^2.$ This immediately implies that the maximum value of $h$ can be achieved at $t = l$ where the maximum equals $0.$
    \end{enumerate}
    \item $\frac{1}{3} < l < 1:$ In this case, $h'(t) < 0.$ The maximum value of $h$ can be achieved at $t = l.$ At this point, the maximum value is $0.$
  \end{enumerate}

  We now proceed to determine the maximum of $\frac{l}{12} - \frac{l}{2}\sqrt{\frac{l}{3}-l^2}$. Using $z = 1 - 6l,$ we can simplify the expression to $\frac{(1-z)(1-\sqrt{1-z^2})}{72}.$ It's not difficult to compute that the maximal value is $\frac{3 - 2 \sqrt{2}}{144},$ which is attained at $1-z = 1-\sqrt{1-z^2}.$ As a result, $l$ takes the value $\frac{2-\sqrt{2}}{12}$ while $r$ takes the value $\frac{7 - \sqrt{2}}{6}.$ 
\end{proof}

\section{Details in Theorem~\ref{thm:sqrt}}
\subsection{Proof of uniqueness of root}
  \label{proof-uniqueness}
  \begin{proof}
  We know $x$ is one of the roots of equation:
  \begin{equation}
    \label{eq:prob}
    \begin{aligned}
      1 - \gamma^3 x^2 = (1 - x) H + x(1 - x) L \\
      \iff (L - \gamma^3)x^2 + (H - L)x + (1 - H) = 0
    \end{aligned}
  \end{equation}

  Let $\Delta$ be $(H-L)^2 - 4(L - \gamma^3)(1 - H)$. Then $x$ must be equal to $x_1$ or $x_2$.
  \begin{equation}
    \begin{aligned}
      x_1 = \frac{L - H - \sqrt{\Delta}}{2(L - \gamma^3)}, x_2 = \frac{L - H + \sqrt{\Delta}}{2(L - \gamma^3)}
    \end{aligned}
  \end{equation}

  Reviewing our settings in the game -- that is, $L = 22 - \frac{162}{7}a, H = \frac{162}{7}a - 13, \gamma = \frac{1}{2}$. Then $\Delta = (H + L)^2 - 4L + \frac{1 - H}{2} = 77 - \frac{7}{2}L = 81a$.

  Thus, we have
  \begin{equation}
    \begin{aligned}
      x_1 = \frac{35 - \frac{324}{7}a - 9\sqrt{a}}{2(22 - \frac{162}{7}a - \frac{1}{8})}, x_2 = \frac{35 - \frac{324}{7}a + 9\sqrt{a}}{2(22 - \frac{162}{7}a - \frac{1}{8})}
    \end{aligned}
  \end{equation}

  Define $g(x) = (L - \gamma^3)x^2 + (H - L)x + (1 - H)$, using $a \geq 1$ we can show
  \begin{equation}
    \begin{aligned}
      g(0) = 1 - H = 14 - \frac{162}{7} a < 0 \\
      g(1) = 1 - \gamma^3 > 0
    \end{aligned}
  \end{equation}

  Additionally, $L - \gamma^3 < L \leq 22 - \frac{162}{7} < 0$, then $g(\infty) < 0$. In other words, the smaller root is in $(0,1)$ and another root is larger than $1$. Thus, $x$ is exactly the smaller root of $x_1$ and $x_2$.

  Finally, we can conclude $x = \frac{35 - \frac{324}{7}a + 9\sqrt{a}}{2(22 - \frac{162}{7}a - \frac{1}{8})}$.
  \end{proof}

\subsection{Proof of Lemma~\ref{lemma:sqrt}}
\label{apdx:lemma:sqrt}
  \begin{proof}
    For the sake of simplicity, we use $x$ instead of
    $x^i$ in the following analysis. Define $v$ as the
    valuation of rewards player~4 can obtain when the
    game $G_i$ starts at the state $s_1^i$ and
    players~1,2,3 follow the equilibrium strategy. By the
    construction of $G_i$, we have
    \begin{equation}
    v = (1 - x) + x(1 - x) + x^2(1 - x) + \frac{1}{8} x^3 v
    \end{equation}

    Thus $v$,
    the reward of player~4, equals to
    $\frac{8 - 8x^3}{8 - x^3}$. We analyze $v$ based on
    the value of $a_i.$ Selecting a candidate $a_i$, the
    final result $v$ may fall into one of three possible
    cases. If $a_i$ is a perfect square, $v$ would be a
    rational number. Then, naturally, it can be
    rewritten in another form
    $\frac{v}{\sqrt{a_i}}\sqrt{a_i}$ where
    $\frac{v}{\sqrt{a_i}}$ is a rational number.
    If $a_i$ is \emph{not} a square-free integer, we
    denote $a_i$ as $b^2d$ where $d$ is a square-free
    integer. Due to $x \in \mathbb{Q}(\sqrt{d}) \setminus
    \mathbb{Q}$, $v$ must be in $\mathbb{Q}(\sqrt{d})$
    because $v$ is obtained from $x$ through addition
    and multiplication. Moreover,
    $v \not \in \mathbb{Q}$. Suppose $v$ is a rational
    number, we can get $x^3 = \frac{8v - 8}{v - 8}$ from
    $v = \frac{8 - 8x^3}{8 - x^ 3}$, which means $x^3$ is
    a rational number. Denote $x$ as $p + q \sqrt{d}$, we
    have $x^3 = (dq^3 + 3 p^2 q)\sqrt{d} + 3 d p q^2 +
    p^3$. Because of $x \in \mathbb{Q}(\sqrt{d})
    \setminus \mathbb{Q}$, which implies $q \neq 0$ and
    $dq^3 + 3p^2q \neq 0$, $x^3$ cannot be a rational
    number. Based on the above analysis, we denote $v$
    as $p + q\sqrt{d}$ where $q \neq 0$. Similarly, we
    can rewrite $v$ as $p + \frac{q}{b}\sqrt{b^2d}$. If
    $a_i$ is a square-free integer, we can directly use
    the previous conclusion by simply setting $b=1$. At
    this point, $v$ is of form $p + q\sqrt{a_i}$.

    In summary, regardless of which of the three cases
    it is, we can calculate player~4's payoff and
    express it in the form $p _i+ q_i\sqrt{a_i}$ where
    $q_i \neq 0$. If $q_i < 0$, we can simply set
    player~4's reward to its negation in $G_i$, thereby
    ensuring that $q_i > 0$. Notice that $p_i$ and $q_i$
    can be efficiently computed by directly using
    equation~\eqref{eq:unique-solution}.
  \end{proof}
\end{document}